\newtheorem{lemma}{Lemma}[section]
\newtheorem{theorem}{Theorem}[section]
\newenvironment{remark}{\begin{description} \item[Remark.] }%
{\end{description}}
{\end{description}}
\newcommand{\be}{\begin{equation}}
\newcommand{\ee}{\end{equation}}
\begin{document}
    \title{An efficient explicit approach for predicting the Covid-19 spreading with undetected infectious: The case of Cameroon}
    \author{Eric Ngondiep}
   \date{$^{\text{\,1\,}}$\small{Department of Mathematics and Statistics, College of Science, Al-Imam Muhammad Ibn Saud\\ Islamic University
        (IMIU), $90950$ Riyadh $11632,$ Saudi Arabia.}\\
     \text{\,}\\
       $^{\text{\,2\,}}$\small{Hydrological Research Centre, Institute for Geological and Mining Research, 4110 Yaounde-Cameroon.}\\
     \text{\,}\\
        \textbf{Email addresses:} ericngondiep@gmail.com/engondiep@imamu.edu.sa}

    \maketitle
   \textbf{Abstract.}
   This paper considers an explicit numerical scheme for solving the mathematical model of the propagation of Covid-19 epidemic with undetected infectious cases. We analyze the stability and convergence rate of the new approach in $L^{\infty}$-norm. The proposed method is less time consuming. Furthermore, the method is stable, at least second-order convergent and can serve as a robust tool for the integration of general systems of ordinary differential equations. A wide set of numerical evidences which consider the case of Cameroon are presented and discussed.
   \text{\,} \\
    \text{\,}\\
   \ \noindent {\bf Keywords: mathematical model of SARS-Cov-2 spreading, a two-level explicit scheme, stability analysis, convergence rate,
   numerical experiments.} \\
   \\
   {\bf AMS Subject Classification (MSC). 65M10, 65M05}.

      \section{Introduction and motivation}\label{sec1}
      Deterministic models are important decision tools that can be useful to forecasting different scenarios. The first motivation of studying such
      models is based on the use of the theory of ordinary/partial differential equations and a low computational complexity which can permit a
      better calibration of the model characteristics. Furthermore, deterministic approaches are the only suitable methods that can be used when
      modeling a new problem with few data. For more details, we refer the readers to \cite{1ifvr,24ifvr,5ifvr,43ifvr} and references therein.
      The use of the mathematical models as a predictive tool in the simulation of complex problems arising in a broad range of practical applications
      in biology, environmental fluid mechanic, chemistry and applied mathematics (for example: mathematical model in population biology and
      epidemiology, mixed Stokes-Darcy model, Navier-Stokes equations, nonlinear time-dependent reaction-diffusion problem, heat conduction equation
      and unsteady convection-diffusion-reaction equations) represents a good candidate for developing efficient numerical schemes in the approximate
       solutions of such problems \cite{en11,en2,tg4,tg8,en10,en3,en4,tg10,gv1983,en6,en7,tg15,tg13,nkma,tg17,zl1984}. For parabolic partial
      differential equations (PDEs) which present strong steep gradients (for instance: shallow water flow and advection-diffusion equations), numerical
      algorithms are needed with good resolution of steep gradients \cite{14sn,3sn,11sn,13tls,25tls,30tls,18tls,30mt}.\\

      Early in an epidemic, the quality of the data on infections, deaths, tests and other factors often are limited by undetection or inconsistent
      detention of cases, reporting delays, and poor documentation, all of which affect the quality of any model output. Simpler models may provide
      less valid predictions since they cannot capture complex and unobserved human mixing patterns and other time-varying parameters of infectious
      disease spread. Also, complex models may be no more reliable than simpler ones if they miss key aspects of the biological entities (either ions,
      molecules, proteins or ceils) \cite{3jlj}. At a time when numbers of cases and deaths from coronavirus $2019$ (Covid-19) pandemic continue to
      increase with alarming speed, accurate forecasts from mathematical models are increasingly important for physicians, politicians, epidemiologists,
      the public and most importantly, for authorities responsible of organizing care for the populations they serve. Given the unpredictable behavior
      of severe acute respiratory syndrome Covid-19, it is worth mentioning that efficient numerical approaches are the best tools that can be used to
      predict the spread of the disease with reasonable accuracy. These predictions have crucial consequences regarding how quickly and strongly the
      government of a country mores to curb a pandemic. However, assuming the worst-case scenario at state and national levels will lead to
      inefficiencies (such as: the competition for beds and supplies) and may compromise effective delivery and quality of care, whereas supposing
      the best-case scenario can conduct to disastrous underpreparation.\\

      Covid-19 is a rapidly spreading infectious disease caused by the novel coronavirus SARS-Cov-2, a betacoronavirus which has provided a global
      epidemic. Up today, no drug to treat the Covid-19 disease is officially available (approved by the World Health Organization (WHO)) and a vaccine
      will not be available for several months at the earliest. The only approaches widely used to slow the spread of the pandemic are those of
      classical epidemic control such as: physical distancing, contact tracing, hygiene measure, quarantine and case of isolation. However, the primary
      and most effective use of the epidemiologic models is to estimate the relative effect of various interventions in reducing disease burden rather
      than to produce precise quantitative predictions about extent on duration of disease burdens. Nevertheless, consumers of such models including
      the media, the publics and politicians sometimes focus on the quantitative predictions of infectious and mortality estimates. Such measures of
       potential disease burden are also necessary for planners who consider future outcomes in light of health care capacity. The big challenge consists
      to assess such estimates.\\

      In this paper, we develop an efficient numerical scheme for solving a mathematical model well adapted to Covid-19 pandemic subjected to special
      characteristics (effect of undetected infected cases, effect of different sanitary and infectiousness conditions of hospitalized people and
      estimation of the needs of beds in hospitals) and considering different scenarios \cite{ifvr}. Specifically, the proposed technique should
      provide the numbers of detected infected and undetected infected cases, numbers of deaths and needs of beds in hospitals in countries (for
      example, in Cameroon) where Covid-19 is a very serious health problem. It is Worth noticing that the model of Covid-19 considered in this work
      has been obtained under the asumption of "only within-country disease spread" for territories with relevant number of people infected by
      SARS-Cov-2, where local transmission is the major cause of the disease spread (for instance: case of Cameroon). Furthermore, the parameters
      of the model used in this note are taken from the literature \cite{20ifvr,21ifvr,ifvr}. Our study also relates the disease fatality rate with
      the percentage of detected cases over the real total infected cases which allows to analyze  the importance of this percentage on the impact
      of Covid-19. In addition, to demonstrate the efficiency and validity of the new approach when applied to the mathematical problem of coronavirus
      $2019$ epidemic, we consider the case of Cameroon, the country of the central Africa where one can observe the highest number of people infected
      by the new virus SARS-Cov-2. We compare the results produced by the numerical method to the data obtained from this country and those provided by
      the World Health Organization in its reports \cite{31ifvr}. Finally, it important to mention that the considered area (Cameroon) in the
      numerical experiments can be replaced by any territory worldwide.\\

      This paper is organized as follows: Section $\ref{sec2}$ considers some preliminaries together with the mathematical formulation of Covid-19
      spreading. In section $\ref{sec3}$, we provide a full description of the two-level explicit scheme for solving the problem indicated
      in section $\ref{sec2}$. Section $\ref{sec4}$ analyzes the stability and the convergence rate of the new procedure while a large set of
      numerical experiments are presented and critically discussed in Section $\ref{sec5}$. We draw in section $\ref{sec6}$ the general conclusion and
      we provided our future investigations.

      \section{Preliminaries and mathematical model of SARS-Cov-2 infectiousness}\label{sec2}
      We use a mathematical formalism \cite{ifvr} that describes how infectiousness varies as a function of time since infections for a representative
      cohort of infected persons. We assume that transmission of SARS-Cov-2 is contagious from person to person and not point source. Furthermore, it is
      also assumed that, at the initial phase of Covid-19 disease, the proportion of the population with immunity to SARS is negligible
      \cite{13cp,14cp,15cp,10cp}. At the beginning of a contagious epidemic, a small number of infected people start passing the disease to a large
       population. Individuals can go through nine states. They start out susceptible ($X_{1}$: the person is not infected by the disease pathogen),
       exposed ($X_{2}$: the person is in the incubation period after being infected by the disease pathogen, but has no clinical signs), infected
       ($X_{3}$: the person has finished the incubation period, may infected other people and start developing the clinical signs. Here, people can
       be taken in charge by sanitary authorities of this country (hospitalized persons) or not detected by the authorities and continue as infectious),
       infectious but undetected ($X_{4}$: the person can still infect other individuals, have clinical signs, but is not detected and reported by the
       authorities (these people will not die)), hospitalized or in quarantine at home ($X_{5}$: the person is in hospital or in quarantine at home,
       can still infect other people, but will recover), hospitalized but will die ($X_{6}$: the person is hospitalized and can infect other people,
       but will die), recovered after being previously infectious but undetected ($X_{8}$: the person was not previously detected as infectious,
       survived the disease, is no longer infectious and has developed a natural immunity to the disease), recovered after being previously detected
       as infectious ($X_{7}$: the person survived the disease, is no longer infectious and has developed a natural immunity to the virus, but she/he
        remains in hospital for a convalescence period $d_{0}$ days), dead by SARS-Cov-2 ($X_{9}$).\\

      The proposed model is based on thirteen parameters.

      \begin{itemize}
        \item $R_{0}$ denotes basis reproductive number, that is, the expected number of new infectious cases per infectious case,
        \item $N$ is the number of persons in a considered country before the starting of the pandemic,
        \item $\mu_{1}\in[0,1],$ designates the natality rate ($day^{-1}$) in the considered country (the number of births per day and per capita),
        \item $\mu_{2}\in[0,1],$ represents the mortality rate ($day^{-1}$) in the considered country (the number of deaths per day and per capita),
        \item $w(t)\in[\underline{w},\overline{w}]\subset[0,1],$ denotes the case fatality rate in the considered territory at time $t$ (the proportion of deaths compared to the total number of infectious people (detected or undetected). Here, $\underline{w}$ and $\overline{w}$ are the minimum and maximun case fatality rates in the country, respectively),
        \item $\theta(t)\in[\overline{w},1],$ means the fraction of infected people that are detected and reported by the authorities in the country at time $t.$ For the convenience of writting, we assume that all the deaths due to Covid-19 are detected and reported, so $\theta(t)\geq\overline{w}$,
        \item $\beta_{X_{j}}\in\mathbb{R}^{+},$ for $j=2,3,\cdots,6,$ are the disease contact rates ($day^{-1}$) of a person in the corresponding compartment $X_{j}$, in the country (without taking into account the control measures),
        \item $\beta_{X_{4}}(\theta)\in\mathbb{R}^{+},$ represents the disease contact rates ($day^{-1}$) of a person in compartment $X_{4}$, in the country (without taking into account the control measures), where the fraction of infected individuals that are detected is $\theta(t)$,
        \item $\gamma_{X_{2}}\in(0,\infty),$ designates the transition rate ($day^{-1}$) from compartment $X_{2}$ to compartment $X_{3}$. It's the same in all the countries,
        \item $\gamma_{X_{3}}(t)\in(0,\infty),$ is the transition rate ($day^{-1}$) from compartment $X_{3}$ to compartments $X_{4}$, $X_{5}$ or $X_{6}$ at time $t$. It can change from a country to another,
        \item $\gamma_{X_{4}}(t)$, $\gamma_{X_{5}}(t)$ and $\gamma_{X_{6}}(t)\in(0,\infty),$ denote the transition rate ($day^{-1}$) from compartments $X_{4}$, $X_{5}$ or $X_{6}$ to compartments $X_{7}$, $X_{8}$ and $X_{9}$, respectively, in the considered country at time $t$,
        \item $m_{X_{j}}(t)\in[0,1],$ for $j=2,3,\cdots,6,$ are functions representing the efficiency of the control measures applied to the corresponding compartment in the considered country at time $t,$
        \item $\tau_{1}$ is the person infected that arrives in the territory from other countries per day. $\tau_{2}$ is the person infected that leaves the territory from other countries per day. Both can be modeled following the between-country spread part of the Be-CoDis model, see \cite{16ifvr}.
      \end{itemize}
      The control measures applied by the government to curb the Covid-19 spread are those provided by the WHO in \cite{4ifvr,35ifvr}:
       \begin{itemize}
        \item isolation: infected people are isolated from contact with other persons. Only sanitary professionals are in contact with them. Isolated patients receive an adequate medical treatment that reduces the Covid-19 fatality rate,
        \item quarantine: movement of people in the area of origin of an infected person is restricted and controlled (for instance: quick sanitary check-points at the airports) to avoid that possible infected people spread the disease,
        \item tracing: the aim of tracing is to identify potential infectious contacts which may have infected an individual or spread SARS-Cov-2 to other people. Increase the number of tests in order to increase the percentage of detected infected persons,
        \item increase of sanitary resources: number of operational beds and sanitary personal available to detect and treat affected people is increased, producing a decrease in the infectious period for the compartment $X_{3}.$
      \end{itemize}
      Furthermore, the mathematical model of coronavirus $2019$ epidemic considers the following assumptions:
      \begin{description}
        \item[$(a_{1})$] the population at risk is large enough and time period of concern is short enough that over the time period of interest, very close to $100\%$ of the population is susceptible,
        \item[$(a_{2})$] the pandemic is at the early stage and has not reached the point where the susceptible population decreases so much due to death or post-infection immunity that the average number of secondary cases falls,
        \item[$(a_{3})$] unprotected contact results in infection,
        \item[$(a_{4})$] the epidemic in the population of interest begins with a single host (note that the equations used in computing cases and deaths are easily modified if this is not the case),
        \item[$(a_{5})$] infectivity occurs during the incubation period only,
        \item[$(a_{6})$] the models are deterministic, that is, the thirteen parameters of Covid-19 spread cited above are constant values.
      \end{description}
      Under these assumptions, the mathematical formulation of Covid-19 disease is given by the following system of nonlinear ordinary differential
        equations:
      \begin{equation*}
       \frac{dX_{1}}{dt}=-\frac{X_{1}}{N}\left[m_{X_{2}}(t)\beta_{X_{2}}(t)X_{2}+m_{X_{3}}(t)\beta_{X_{3}}(t)X_{3}+m_{X_{4}}(t)\beta_{X_{4}}(\theta)X_{2}+
       m_{X_{5}}(t)\beta_{X_{5}}(t)X_{5}+m_{X_{6}}(t)\beta_{X_{6}}(t)X_{6}\right]
      \end{equation*}
      \begin{equation}\label{1}
        -\mu_{m}X_{1}+\mu_{n}[X_{1}+X_{2}+X_{3}+X_{4}+X_{7}+X_{8}],
      \end{equation}
       \begin{equation*}
       \frac{dX_{2}}{dt}=\frac{X_{1}}{N}\left[m_{X_{2}}(t)\beta_{X_{2}}(t)X_{2}+m_{X_{3}}(t)\beta_{X_{3}}(t)X_{3}+m_{X_{4}}(t)\beta_{X_{4}}(\theta)X_{2}+
       m_{X_{5}}(t)\beta_{X_{5}}(t)X_{5}+m_{X_{6}}(t)\beta_{X_{6}}(t)X_{6}\right]
      \end{equation*}
      \begin{equation}\label{2}
        -\mu_{m}X_{2}-\gamma_{X_{2}}(t)X_{2}+\tau_{1}(t)-\tau_{2}(t),
      \end{equation}
       \begin{equation}\label{3}
       \frac{dX_{3}}{dt}=\gamma_{X_{2}}(t)X_{2}-\left(\mu_{m}+\gamma_{X_{3}}(t)\right)X_{3},\text{\,\,\,\,}
       \frac{dX_{4}}{dt}=(1-\theta(t))\gamma_{X_{3}}(t)X_{3}-\left(\mu_{m}+\gamma_{X_{4}}(t)\right)X_{4},
      \end{equation}
      \begin{equation}\label{4}
       \frac{dX_{5}}{dt}=(\theta(t)-w(t))\gamma_{X_{3}}(t)X_{3}-\gamma_{X_{5}}(t)X_{5},\text{\,\,\,\,}
       \frac{dX_{6}}{dt}=w(t)\gamma_{X_{3}}(t)X_{3}-\gamma_{X_{6}}(t)X_{6},
      \end{equation}
      \begin{equation}\label{5}
       \frac{dX_{7}}{dt}=\gamma_{X_{5}}(t)X_{5}-\mu_{m}X_{7},\text{\,\,\,\,}
       \frac{dX_{8}}{dt}=\gamma_{X_{4}}(t)X_{4}-\mu_{m}X_{8}\text{\,\,\,\,and\,\,\,\,}\frac{dX_{9}}{dt}=\gamma_{X_{6}}(t)X_{6},
      \end{equation}
      with the initial conditions
      \begin{equation}\label{6}
      X_{j}(t_{0})=X_{j}^{0}\in(0,\infty),\text{\,\,\,for\,\,\,}j=1,2,\cdots,9,
      \end{equation}
      where all the unknowns $X_{j}$ dependent on the time $t\in[t_{0},T_{max}].$ Setting $X(t)=(X_{1},X_{2},\ldots,X_{9})^{T}$ and
      $F(t,X(t))=(F_{1}(t,X(t)),F_{2}(t,X(t)),\ldots,F_{9}(t,X(t)))^{T},$ where
      \begin{equation*}
       F_{1}(t,X(t))=-\frac{X_{1}}{N}\left[m_{X_{2}}(t)\beta_{X_{2}}(t)X_{2}+m_{X_{3}}(t)\beta_{X_{3}}(t)X_{3}+m_{X_{4}}(t)\beta_{X_{4}}(\theta)X_{2}+
       m_{X_{5}}(t)\beta_{X_{5}}(t)X_{5}\right.
      \end{equation*}
      \begin{equation}\label{10}
        +\left.m_{X_{6}}(t)\beta_{X_{6}}(t)X_{6}\right]-\mu_{m}X_{1}+\mu_{n}[X_{1}+X_{2}+X_{3}+X_{4}+X_{7}+X_{8}],
      \end{equation}
       \begin{equation*}
       F_{2}(t,X(t))=\frac{X_{1}}{N}\left[m_{X_{2}}(t)\beta_{X_{2}}(t)X_{2}+m_{X_{3}}(t)\beta_{X_{3}}(t)X_{3}+m_{X_{4}}(t)\beta_{X_{4}}(\theta)X_{2}+
       m_{X_{5}}(t)\beta_{X_{5}}(t)X_{5}\right.
      \end{equation*}
      \begin{equation}\label{11}
        +\left.m_{X_{6}}(t)\beta_{X_{6}}(t)X_{6}\right]-\mu_{m}X_{2}-\gamma_{X_{2}}(t)X_{2}+\tau_{1}(t)-\tau_{2}(t),
      \end{equation}
       \begin{equation}\label{12}
       F_{3}(t,X(t))=\gamma_{X_{2}}(t)X_{2}-\left(\mu_{m}+\gamma_{X_{3}}(t)\right)X_{3},\text{\,\,\,}
       F_{4}(t,X(t))=(1-\theta(t))\gamma_{X_{3}}(t)X_{3}-\left(\mu_{m}+\gamma_{X_{4}}(t)\right)X_{4},
      \end{equation}
      \begin{equation}\label{13}
       F_{5}(t,X(t))=(\theta(t)-w(t))\gamma_{X_{3}}(t)X_{3}-\gamma_{X_{5}}(t)X_{5},\text{\,\,\,}
       F_{6}(t,X(t))=w(t)\gamma_{X_{3}}(t)X_{3}-\gamma_{X_{6}}(t)X_{6},
      \end{equation}
      \begin{equation}\label{14}
       F_{7}(t,X(t))=\gamma_{X_{5}}(t)X_{5}-\mu_{m}X_{7},\text{\,\,\,}F_{8}(t,X(t))=\gamma_{X_{4}}(t)X_{4}-\mu_{m}X_{8}\text{\,\,\,and\,\,\,}
      F_{9}(t,X(t))=\gamma_{X_{6}}(t)X_{6},
      \end{equation}
      the system of nonlinear equations $(\ref{1})$-$(\ref{5})$ is equivalent to
      \begin{equation}\label{s1}
      \frac{dX}{dt}=F(t,X).
      \end{equation}
      \begin{remark}
      In the modeling point of view, the term $\frac{w(t)}{\theta(t)}$ corresponds to the apparent fatality rate of the disease (obtained by
      considering only the detected cases) in the considered area at time $t,$ whereas $w(t)$ is the real fatality rate of coronavirus $2019$ disease.
      \end{remark}
      Since the mathematical model of Covid-19 provided by the system of equations $(\ref{1})$-$(\ref{5})$ is too complex and because both natality
       and mortality (not from SARS-Cov-2) do not seem to be useful factors for this pandemic (at least for relatively short periods of time), we
       assume in the rest of this paper that
      \begin{equation}\label{19a}
      \mu_{m}=\mu_{n}=0.
      \end{equation}
      It is worth mentioning that the aim of this paper is to compute the following Covid-19 characteristics:
      \begin{description}
        \item[1)] the model cumulative of coronavirus $2019$ cases at day $t$ given by
        \begin{equation}\label{19b}
      c_{m}(t)=X_{5}(t)+X_{6}(t)+X_{8}(t)+X_{9}(t),
      \end{equation}
        \item[2)] the model cumulative number of deaths (due to Covid-19) at day $t$, which is given by $X_{9}(t)$,
        \item[3)] $R_{e}(t)$ which is the effective reproductive number of Covid-19,
        \item[4)] the number of people in hospital is estimated by the following equation
        \begin{equation}\label{20}
      Host(t)=X_{6}(t)+p(t)[X_{5}(t)+(X_{7}(t)-X_{7}(t-d_{0}))],
      \end{equation}
      where $p(t)$ represents the fraction, at time $t,$ of people in compartment $X_{5}$ that are hospitalized and $d_{0}$ days is the period
      of convalescence (i.e., the time a person is still hospitalized after recovering from Covid-19). This function can help to estimate and plan the
       number of clinical beds needed to treat all the SARS-Cov-2 cases at time $t$,
        \item[5)] the maximum number of hospitalized persons at the same time in the territory during the time interval $[t_{0},T_{max}],$ which is defined as
          \begin{equation}\label{21}
      MaxHost=\underset{t_{0}\leq t\leq T_{max}}{\max}Host(t).
      \end{equation}
      $MaxHost$ can help to estimate and plan the number of clinical beds needed to treat all the coronavirus $2019$ cases over the interval
       $[t_{0},T_{max}],$
        \item[6)] the number of people infected during the time interval $[t_{0},T_{max}],$ by contact with people in compartments $X_{2}$, $X_{4}$ and $X_{10}=X_{5}+X_{6}$, respectively. They are defined as
      \begin{equation}\label{22}
      \Gamma_{X_{2}}(t)=\frac{1}{N}\int_{t_{0}}^{T_{max}}m_{X_{2}}(s)\beta_{X_{2}}(s)X_{2}(s)X_{1}(s)ds,
      \end{equation}
      \begin{equation}\label{23}
      \Gamma_{X_{4}}(t)=\frac{1}{N}\int_{t_{0}}^{T_{max}}m_{X_{4}}(s)\beta_{X_{4}}(s)X_{4}(s)X_{1}(s)ds,
      \end{equation}
      \begin{equation}\label{24}
      \Gamma_{X_{10}}(t)=\frac{1}{N}\int_{t_{0}}^{T_{max}}\left(m_{X_{5}}(s)\beta_{X_{5}}(s)X_{5}(s)+m_{X_{6}}(s)\beta_{X_{6}}(s)X_{6}(s)\right)X_{1}(s)ds.
      \end{equation}
      \end{description}
      We recall that the basis reproduction number $R_{0}$  is defined as the number of cases an infected individual generates on average over the
      course of its infectious period, in an otherwise uninfected population and without special control measures. It depends on the considered
      population, but does change during the spread of the disease, while the effective reproduction number $R_{e}(t)$ is defined as the number of
      cases one infected person generates on average over the course of its infectious period. A part of the population can be already infected
      and/or special control measures that have been implemented. It depends on the spread of the disease. In addition, $R_{e}(t_{0})=R_{0},$ and
      the evolution of the epidemic slow down when $R_{e}(t)<1.$\\

      Now, applying the next generation method \cite{44ifvr} to the nonlinear system $(\ref{1})$-$(\ref{5})$ to get
      \begin{equation*}
      R_{0}=\left\{\gamma_{X_{6}}[((\beta_{X_{4}}(1-\theta)\gamma_{X_{5}}+\beta_{X_{5}}\gamma_{X_{4}}(\theta-w))\gamma_{X_{3}}+\beta_{X_{3}}\gamma_{X_{4}}
      \gamma_{X_{5}})\gamma_{X_{2}}+\beta_{X_{2}}\gamma_{X_{3}}\gamma_{X_{4}}\gamma_{X_{5}}]\right.
      \end{equation*}
      \begin{equation}\label{25}
      \left.+w\beta_{X_{6}}\gamma_{X_{2}}\gamma_{X_{3}}\gamma_{X_{4}}\gamma_{X_{5}}\right\}(\gamma_{X_{2}}\gamma_{X_{3}}\gamma_{X_{4}}
      \gamma_{X_{5}}\gamma_{X_{6}})^{-1},
      \end{equation}
       and
       \begin{equation*}
      R_{e}(t)=\frac{X_{1}(t)}{N}\left\{\gamma_{X_{6}}[((m_{X_{4}}\beta_{X_{4}}(1-\theta)\gamma_{X_{5}}+m_{X_{5}}\beta_{X_{5}}\gamma_{X_{4}}(\theta-w))\gamma_{X_{3}}
      +m_{X_{3}}\beta_{X_{3}}\gamma_{X_{4}}\gamma_{X_{5}})\gamma_{X_{2}}\right.
      \end{equation*}
      \begin{equation}\label{26}
      \left.+m_{X_{2}}\beta_{X_{2}}\gamma_{X_{3}}\gamma_{X_{4}}\gamma_{X_{5}}]+wm_{X_{6}}\beta_{X_{6}}\gamma_{X_{2}}\gamma_{X_{3}}\gamma_{X_{4}}\gamma_{X_{5}}\right\}(\gamma_{X_{2}}\gamma_{X_{3}}\gamma_{X_{4}}
      \gamma_{X_{5}}\gamma_{X_{6}})^{-1},
      \end{equation}
      where for the sake of simplicity of notations, all previous coefficients correspond to their particular values at times $t_{0}$ and $t,$
       respectively.\\

      In the literature \cite{23ifvr,48ifvr}, it is established that the observed patterns of Covid-19 are not completely consistent with the hypothesis
       that high absolute humidity may limit the survival and transmission of the virus, whereas the lower is the temperature, the greater is the
      survival period of the SARS-Cov-2 outside the host. Since there is no scientific evidence of the effect of the humidity and the temperature
       on the SARS-Cov-2, these factors are not included in our model.\\

      Focusing on the application on the control strategies, the efficiency of these measures indicated in \cite{19ifvr} satisfies equations
      \begin{equation}\label{28}
      m_{X_{j}}(t):=m(t)=\left\{
                                                   \begin{array}{ll}
                                                     (m_{l}-m_{l+1})\exp[-k_{l+1}(t-\lambda_{l})], & \hbox{$t\in[t_{l},\lambda_{l+1})$,
                                                     $l=0,1,\cdots,q-1$,}\\
                                                     \text{\,}\\
                                                     (m_{q-1}-m_{q})\exp[-k_{q}(t-\lambda_{q-1})], & \hbox{$t\in[t_{q-1},\infty)$,}
                                                   \end{array}
                                                 \right.
      \end{equation}
      for $j=1,2,\cdots,6$, where $m_{l}\in[0,1]$ measures the intensity of the control measures (greater value implies lower value of disease contact
       rates), $k_{l}\in[0,\infty)$ (in $day^{-1}$) simulates the efficiency of the control strategies (greater value implies lower value of disease
       contact rates) and $\lambda_{l}\in[t_{0},\infty)$, $l=1,2,\ldots,q$, denotes the first day of application of each control strategy.
        $\lambda_{0}\in[t_{0},\infty)$ is the first day of application of a control measure that was being used before $t_{0},$ if any. In this
        work, $q\in\mathbb{N}$ represents the number of changes of control strategy. In general, the values of $\lambda_{l}$ are typically taken
       in the literature (using dates when the countries implement special control measures). It is important to remind that some of the values
        of $m_{l}$ can be also sometimes known. The rest of the parameters needed to be calibrated.\\

      In the following, we assume that the case fatality rate $w(t)$, depends on the considered country, time $t$, and it can be affected by the
      application of the control measures (such as, earlier detection, better sanitary condition, etc...)\cite{16ifvr}. Thus, it satisfies equation
      \begin{equation}\label{29a}
      w(t)=m(t)\underline{w}+(1-m(t))\overline{w},
      \end{equation}
      where $\overline{w}\in[0,1]$ is the case fatality rate when no control measures are applied (i.e., $m(t)=1$) and $\underline{w}\in[0,1]$ is
      the case fatality rate when implemented control measures are fully applied ($m(t)=m_{q}$).\\

      Denoting by $d_{X_{3}}$, $d_{X_{4}}$, $d_{X_{5}}$ and $d_{X_{6}}$, be the "average" duration in days of a person in compartment $X_{3}$, $X_{4}$,
       $X_{5}$ and $X_{6}$, respectively, without the application of control strategies, we assume as in \cite{20ifvr,21ifvr} that
      \begin{itemize}
        \item the transition rate from $X_{2}$ to $X_{3}$ depends on the disease and, therefore is considered constant, that is $\gamma_{2}(t)=\alpha=c^{t},$
        \item the value of $\gamma_{X_{3}}(t):=\gamma(t)$ can be increased due to the application of control measures (that is, people with symptoms are detected earlier). As a consequence, the values of $\gamma_{X_{4}}(t)$,  $\gamma_{X_{5}}(t)$ and  $\gamma_{X_{6}}(t)$ can be decreased (i.e., persons with symptoms stay under observation during more time),
        \item $d_{X_{6}}=d_{X_{5}}+\delta,$    $\delta>0,$
        \item for the sake of readability, the infectious period for undetected individuals is the same than that of hospitalized people that survive the disease. So $d_{X_{4}}=d_{X_{5}}$. Furthermore, we suppose that the additional time a person is in the compartments $X_{3}$ and $X_{4}$ is constant, so it comes from \cite{16ifvr} that
            \begin{equation}\label{29}
            \gamma(t):=\gamma_{X_{3}}(t)=\frac{1}{d_{X_{3}}-g(t)},\text{\,\,\,\,\,\,}(day^{-1})
            \end{equation}
            \begin{equation}\label{30}
            \rho(t):=\gamma_{X_{4}}(t)=\gamma_{X_{5}}(t)=\frac{1}{d_{X_{4}}+g(t)},\text{\,\,\,\,\,\,}(day^{-1})
            \end{equation}
            \begin{equation}\label{31}
            \psi(t):=\gamma_{X_{6}}(t)=\frac{1}{d_{X_{4}}+g(t)+\delta},\text{\,\,\,\,\,\,}(day^{-1})
            \end{equation}
            where $g(t)=d_{g}(1-m(t))$ represents the decrease of the duration of the function $d_{X_{3}}$ due to the application of the
           control measures at time $t,$ $d_{g}$ is the maximum number of days that $d_{X_{3}}$ can be decreased due to the control measures.\\
             \end{itemize}
          Finally, the disease contact rate $\beta_{X_{4}}(\theta)$ is defined by
           \begin{equation}\label{32}
        \beta_{X_{4}}(\theta)=\left\{
                                \begin{array}{ll}
                                  \overline{\beta}_{X_{3}}, & \hbox{if $\theta=\overline{w}$,} \\
                                  \text{\,}\\
                                  nonincrease, & \hbox{} \\
                                  \text{\,}\\
                                  \underline{\beta}_{X_{3}}, & \hbox{if $\theta=\underline{w}$,}
                                \end{array}
                              \right.
       \end{equation}
       where $\underline{\beta}_{X_{3}}$ and $\overline{\beta}_{X_{3}}$ are suitable lower and upper bounds, respectively. For the convenience of
       writing, we assume that $\overline{\beta}_{X_{3}}=\beta_{X_{3}}:=\beta.$ In addition, the people in compartments $X_{2}$, $X_{4}$, $X_{5}$
       and $X_{6}$ are less infectious than people in compartment $X_{3}$ (due to their lower virus load or isolation measures). This fact results in
      \begin{equation}\label{33}
       \beta_{X_{2}}=c_{X_{2}}\beta_{X_{3}},\text{\,\,}\beta_{X_{5}}=\beta_{X_{6}}=c_{X_{10}}(t)\beta_{X_{3}},\text{\,\,}
       \underline{\beta}_{X_{3}}=c_{u}\beta_{X_{3}},
       \end{equation}
       where $c_{X_{2}},c_{X_{10}},c_{u}\in[0,1].$

       \section{Construction of the two-level explicit numerical scheme}\label{sec3}
       In this section, we develop the robust two-level explicit scheme for solving the mathematical problem $(\ref{1})$-$(\ref{5})$ modeling the spread
       of Covid-19 with undetected cases.\\

       Let $h:=\Delta t=\frac{T_{max}-t_{0}}{M}$ be the step size, $M$ is a positive integer. Set $t_{n}=t_{0}+nh,$
        $t_{n-\frac{1}{2}}=\frac{t_{n}+t_{n-1}}{2}$ for $n=1,2,...,M$ and $\Omega_{h}=\{t_{n},0\leq n\leq M\}$ be a regular partition of
       $[t_{0},T_{max}]$. Let $\mathcal{F}_{h}=\{X^{n}_{i},n=0,1,..., M;1\leq i\leq9\},$ be the grid functions space defined on
       $\Omega_{h}\times\mathbb{R}^{9}\subset I\times\mathbb{R}^{9}:=[t_{0},T_{max}]\times\mathbb{R}^{9}.$\\

       Define the following norms
       \begin{equation}\label{34}
        \|X^{n}\|_{\infty}=\underset{1\leq i\leq9}{\max}|X_{i}^{n}|\text{\,\,\,and\,\,\,}\||X|\|_{L^{2}(I)}=
        \left(h\underset{n=0}{\overset{M}\sum}\|X^{n}\|_{\infty}^{2}\right)^{\frac{1}{2}}
       \end{equation}
       where $|\cdot|$ designates the norm defined on the field of complex numbers $\mathbb{C}$. Furthermore, denote
       \begin{equation}\label{35}
        P_{j}^{(i)}(t,X(t))=\underset{l}{\sum}F_{i}(t_{l},X(t_{l}))L_{l}(t),
       \end{equation}
       where the function $L_{l}(t)$ is given by
       \begin{equation}\label{36}
        L_{l}(t)=\underset{\overset{q}{q\neq l}}{\prod}\frac{t-t_{q}}{t_{l}-t_{q}},
       \end{equation}
       be a polynomial of degree $j$ interpolating the function $F_{i}(t,X(t))$ at the node points $(t_{l},F_{i}(t_{l},X(t_{l})))$. According to
       equations $(\ref{35})$-$(\ref{36})$, it's important to remind that $P_{j}^{(i)}(t,X(t))$ is not necessarily the interpolation polynomial of
       degree $j$ of the function $F_{i}(t,X(t))$ at the node points $(t_{l},F_{i}(t_{l},X(t_{l})))$.\\

       Now, integrating both sides of equation $(\ref{s1})$ at the node points $t_{n}$ and $t_{n+\frac{1}{2}}$, this yields
       \begin{equation*}
        X(t_{n+\frac{1}{2}})-X(t_{n})=\int_{t_{n}}^{t_{n+\frac{1}{2}}}F(t,X)dt,
       \end{equation*}
       which is equivalent to
       \begin{equation}\label{37}
        X(t_{n+\frac{1}{2}})=X(t_{n})+\int_{t_{n}}^{t_{n+\frac{1}{2}}}F(t,X)dt.
       \end{equation}
       For $j=1,$ $P_{1}^{(i)}(t,X(t))$ is a linear polynomial approximating the function $F_{i}(t,X(t))$ at the points $(t_{n},F_{i}(t_{n},X(t_{n})))$ and
       $(t_{n+\frac{1}{2}},F_{i}(t_{n+\frac{1}{2}},X(t_{n+\frac{1}{2}})))$. Using equations $(\ref{35})$ and $(\ref{36})$, it is easy to observe that
       \begin{equation*}
        P_{1}^{(i)}(t,X(t))=F_{i}(t_{n},X(t_{n}))\frac{t-t_{n+\frac{1}{2}}}{t_{n}-t_{n+\frac{1}{2}}}+F_{i}\left(t_{n+\frac{1}{2}},X(t_{n+\frac{1}{2}})\right)
        \frac{t-t_{n}}{t_{n+\frac{1}{2}}-t_{n}}=\frac{2}{h}\left[\left(F_{i}\left(t_{n+\frac{1}{2}},X(t_{n+\frac{1}{2}})\right)-\right.\right.
       \end{equation*}
       \begin{equation}\label{38}
        \left.\left.F_{i}(t_{n},X(t_{n}))\right)t+t_{n+\frac{1}{2}}F_{i}(t_{n},X(t_{n}))-t_{n}F_{i}\left(t_{n+\frac{1}{2}},X(t_{n+\frac{1}{2}})
        \right)\right],
       \end{equation}
       where the error term is given by
       \begin{equation}\label{39}
        F_{i}(t,X(t))-P_{1}^{(i)}(t,X(t))=\frac{1}{2}(t-t_{n})(t-t_{n+\frac{1}{2}})\frac{d^{2}F_{i}}{dt^{2}}(t_{\epsilon},X(t_{\epsilon})):=O(h^{2}),
       \end{equation}
       where $t_{\epsilon}$ (respectively, each component $X_{i}(t_{\epsilon})$ of $X(t_{\epsilon})$) is an unknown function which is between the
       maximum and the minimum of the numbers $t_{n}$, $t_{n+\frac{1}{2}}$ and $t$ (respectively: $X_{i}(t_{n})$, $X_{i}(t_{n+\frac{1}{2}})$
        and $X_{i}(t)$). But equation $(\ref{39})$ can be rewritten as
       \begin{equation}\label{40}
        F_{i}(t,X(t))=P_{1}^{(i)}(t,X(t))+O(h^{2}),\text{\,\,\,for\,\,\,}i=1,2,...,9.
       \end{equation}
       Substituting approximation $(\ref{40})$ into the ith equation of the system $(\ref{37})$, we obtain
       \begin{equation}\label{41}
        X_{i}(t_{n+\frac{1}{2}})=X_{i}(t_{n})+\int_{t_{n}}^{t_{n+\frac{1}{2}}}P_{1}^{(i)}(t,X)dt+O(h^{3}),\text{\,\,\,for\,\,\,}i=1,2,...,9,
       \end{equation}
       which is equivalent to the following system
       \begin{equation}\label{42}
        X(t_{n+\frac{1}{2}})=X(t_{n})+\int_{t_{n}}^{t_{n+\frac{1}{2}}}Q_{1}(t,X)dt+\mathcal{O}(h^{3}),
       \end{equation}
       where $Q_{1}(t,X(t))=\left(P_{1}^{(1)}(t,X(t)),P_{1}^{(2)}(t,X(t)),...,P_{1}^{(9)}(t,X(t))\right)^{T}$ and
       $\mathcal{O}(h^{3})=(O(h^{3}),O(h^{3}),...,O(h^{3}))^{T}.$\\

       The integration of both sides of equation $(\ref{38})$ provides
       \begin{equation*}
        \int_{t_{n}}^{t_{n+\frac{1}{2}}}P_{1}^{(i)}(t,X)dt=\frac{1}{h}\left[\left[F_{i}\left(t_{n+\frac{1}{2}},X(t_{n+\frac{1}{2}})\right)-
        F_{i}(t_{n},X(t_{n}))\right](t_{n+\frac{1}{2}}^{2}-t_{n}^{2})+2\left[t_{n+\frac{1}{2}}F_{i}(t_{n},X(t_{n}))-\right.\right.
       \end{equation*}
       \begin{equation*}
        \left.\left.t_{n}F_{i}\left(t_{n+\frac{1}{2}},X(t_{n+\frac{1}{2}})\right)\right](t_{n+\frac{1}{2}}-t_{n})\right]=
        \left[F_{i}\left(t_{n+\frac{1}{2}},X(t_{n+\frac{1}{2}})\right)-F_{i}(t_{n},X(t_{n}))\right](t_{n}+\frac{h}{4})
       \end{equation*}
       \begin{equation}\label{43}
        +t_{n+\frac{1}{2}}F_{i}(t_{n},X(t_{n}))-t_{n}F_{i}\left(t_{n+\frac{1}{2}},X(t_{n+\frac{1}{2}})\right)=
        \frac{h}{4}\left[F_{i}\left(t_{n+\frac{1}{2}},X(t_{n+\frac{1}{2}})\right)+F_{i}(t_{n},X(t_{n}))\right],
       \end{equation}
       where the last two equalities come from the identities $t_{n+\frac{1}{2}}^{2}-t_{n}^{2}=(t_{n+\frac{1}{2}}-t_{n})(t_{n+\frac{1}{2}}+t_{n})$,
       $t_{n+\frac{1}{2}}-t_{n}=\frac{h}{2}$ and $t_{n+\frac{1}{2}}+t_{n}=2t_{n}+\frac{h}{2}.$ To get the desired first-level of the new algorithm,
       we should approximate the sum $F_{i}\left(t_{n+\frac{1}{2}},X(t_{n+\frac{1}{2}})\right)+F_{i}(t_{n},X(t_{n}))$ by the term
       $a_{1}F_{i}(t_{n},X(t_{n}))+a_{2}F_{i}\left(t_{n}+p_{1}h,X(t_{n})+p_{2}hF(t_{n},X(t_{n}))\right)$, in which the coefficients $a_{1}$,
       $a_{2}$, $p_{1}$ and $p_{2}$, are real numbers and are chosen so that the Taylor expansion
       \begin{equation*}
        \frac{X_{i}(t_{n+\frac{1}{2}})-X_{i}(t_{n})}{h/2}-\frac{1}{2}\left[F_{i}\left(t_{n+\frac{1}{2}},X(t_{n+\frac{1}{2}})\right)+F_{i}(t_{n},X(t_{n}))\right]
        =O(h^{2}).
       \end{equation*}
       The application of the Taylor series expansion for $X_{i}$ and $F_{i}$ about $t_{n}$ and $(t_{n},X(t_{n}))$, respectively, with step size
       $h/2$ using forward difference representations gives
       \begin{equation}\label{45}
        X_{i}(t_{n+\frac{1}{2}})=X_{i}(t_{n})+\frac{h}{2}F_{i}(t_{n},X(t_{n}))+\frac{h^{2}}{8}\left[\partial_{t}F_{i}(t_{n},X(t_{n}))
        +\underset{k=1}{\overset{9}\sum}F_{k}(t_{n},X(t_{n}))\partial_{k}F_{i}(t_{n},X(t_{n}))\right]+O(h^{3}),
       \end{equation}
       and
       \begin{equation*}
        F_{i}\left(t_{n}+p_{1}h,X(t_{n})+p_{2}hF(t_{n},X(t_{n}))\right)=F_{i}(t_{n},X(t_{n}))+hp_{1}\partial_{t}F_{i}(t_{n},X(t_{n}))+
       \end{equation*}
       \begin{equation}\label{46}
       hp_{2}\underset{k=1}{\overset{9}\sum}F_{k}(t_{n},X(t_{n}))\partial_{k}F_{i}(t_{n},X(t_{n}))+O(h^{2}),
       \end{equation}
       where $\partial_{t}F_{i}$ denotes $\frac{\partial F_{i}}{\partial t}$ and $\partial_{k}F_{i}$ represent $\frac{\partial F_{i}}{\partial X_{k}}$, for
       $k=1,2,...,9$.\\

       Combining equations $(\ref{45})$-$(\ref{46}),$ direct calculations yield
       \begin{equation*}
        \frac{X_{i}(t_{n+\frac{1}{2}})-X_{i}(t_{n})}{h/2}-\frac{1}{2}\left[F_{i}\left(t_{n+\frac{1}{2}},X(t_{n+\frac{1}{2}})\right)+F_{i}(t_{n},X(t_{n}))\right]
        =\left(1-\frac{a_{1}+a_{2}}{2}\right)F_{i}(t_{n},X(t_{n}))+
       \end{equation*}
       \begin{equation*}
       \frac{h}{2}\left[\left(\frac{1}{2}-a_{2}p_{1}\right)\partial_{t}F_{i}(t_{n},X(t_{n}))+\left(\frac{1}{2}-a_{2}p_{2}\right)\underset{k=1}{\overset{9}\sum}
        F_{k}(t_{n},X(t_{n}))\partial_{k}F_{i}(t_{n},X(t_{n}))\right]+O(h^{2}),
       \end{equation*}
       which equals $O(h^{2})$ if and only if $a_{1}+a_{2}=2,$ $a_{2}p_{1}=\frac{1}{2}$ and $a_{2}p_{2}=\frac{1}{2}$. But the last two equations require
        $a_{2}\neq0$, $p_{1}\neq0$ and $p_{2}\neq0$. Take for instance
       \begin{equation}\label{47}
        p_{1}=p_{2}=\frac{1}{2}, \text{\,\,\,so\,\,\,} a_{1}=a_{2}=1.
       \end{equation}
       Plugging equation $(\ref{46})$ and relation $(\ref{47})$, it is not hard to observe that the sum
        $F_{i}\left(t_{n+\frac{1}{2}},X(t_{n+\frac{1}{2}})\right)+F_{i}(t_{n},X(t_{n}))$ is approximate as
       \begin{equation*}
        F_{i}\left(t_{n+\frac{1}{2}},X(t_{n+\frac{1}{2}})\right)+F_{i}(t_{n},X(t_{n}))=2F_{i}(t_{n},X(t_{n}))+\frac{h}{2}
       \left[\partial_{t}F_{i}(t_{n},X(t_{n}))+\right.
       \end{equation*}
       \begin{equation}\label{48}
       \left.\underset{k=1}{\overset{9}\sum}F_{k}(t_{n},X(t_{n}))\partial_{k}F_{i}(t_{n},X(t_{n}))\right]+O(h^{2}).
       \end{equation}
       A combination of equations $(\ref{43})$ and $(\ref{48})$ results in
       \begin{equation}\label{49}
        \int_{t_{n}}^{t_{n+\frac{1}{2}}}P_{1}^{(i)}(t,X)dt=\frac{h}{2}F_{i}(t_{n},X(t_{n}))+\frac{h^{2}}{8}
        \left[\partial_{t}F_{i}(t_{n},X(t_{n}))+\underset{k=1}{\overset{9}\sum}F_{k}(t_{n},X(t_{n}))\partial_{k}F_{i}(t_{n},X(t_{n}))\right]+O(h^{3}).
       \end{equation}
       Substituting equation $(\ref{49})$ into $(\ref{41})$, this provides
       \begin{equation}\label{44}
       X_{i}(t_{n+\frac{1}{2}})=X_{i}(t_{n})+\frac{h}{2}F_{i}(t_{n},X(t_{n}))+\frac{h^{2}}{8}\left[\partial_{t}F_{i}(t_{n},X(t_{n}))
       +\underset{k=1}{\overset{9}\sum}F_{k}(t_{n},X(t_{n}))\partial_{k}F_{i}(t_{n},X(t_{n}))\right]+O(h^{3}).
       \end{equation}
       Tracking the infinitesimal term $O(h^{3}),$ equation $(\ref{44})$ can be approximate as
       \begin{equation}\label{50}
       X_{i}^{n+\frac{1}{2}}=X_{i}^{n}+\frac{h}{2}F_{i}(t_{n},X^{n})+\frac{h^{2}}{8}\left[\partial_{t}F_{i}(t_{n},X^{n})+\underset{k=1}{\overset{9}\sum}
        F_{k}(t_{n},X^{n})\partial_{k}F_{i}(t_{n},X^{n})\right],\text{\,\,\,for\,\,\,}i=1,2,...,9.
       \end{equation}
       The difference equations provided by relation $(\ref{50})$ represent the first-level of the new approach.\\

       To develop the second-level of the desired algorithm, we should integrate both sides of system $(\ref{s1})$ at the node points $t_{n+\frac{1}{2}}$
       and $t_{n+1}$. Thus
       \begin{equation*}
      X(t_{n+1})-X(t_{n+\frac{1}{2}})=\int_{t_{n+\frac{1}{2}}}^{t_{n+1}}F(t,X)dt,
      \end{equation*}
       which can be rewritten as
       \begin{equation*}
      X(t_{n+1})=X(t_{n+\frac{1}{2}})+\int_{t_{n+\frac{1}{2}}}^{t_{n+1}}F(t,X)dt.
      \end{equation*}
       This implies
       \begin{equation}\label{51}
        X_{i}(t_{n+1})=X_{i}(t_{n+\frac{1}{2}})+\int_{t_{n+\frac{1}{2}}}^{t_{n+1}}F_{i}(t,X)dt,\text{\,\,\,for\,\,\,}i=1,2,...,9.
        \end{equation}
       Replacing $F_{i}(t,X)$ by the linear interpolating polynomial $P_{1}^{(i)}(t,X)$ at the node points $(t_{n},F_{i}(t_{n},X^{n}))$ and
       $(t_{n+\frac{1}{2}},F_{i}(t_{n+\frac{1}{2}},X^{n})),$ and using $(\ref{39})$, equation $(\ref{51})$ is approximate as
       \begin{equation}\label{52}
        X_{i}(t_{n+1})=X_{i}(t_{n+\frac{1}{2}})+\int_{t_{n+\frac{1}{2}}}^{t_{n+1}}P_{1}^{(i)}(t,X)dt+O(h^{3}).
        \end{equation}
       Plugging $(\ref{52})$ and $(\ref{38})$, simple computations give
       \begin{equation*}
        X_{i}(t_{n+1})=X_{i}(t_{n+\frac{1}{2}})+\frac{2}{h}\left[\frac{1}{2}\left[F_{i}\left(t_{n+\frac{1}{2}},X(t_{n+\frac{1}{2}})\right)-
        F_{i}(t_{n},X(t_{n}))\right](t_{n+1}^{2}-t_{n+\frac{1}{2}}^{2})+\left[t_{n+1}F_{i}(t_{n},X(t_{n}))-\right.\right.
       \end{equation*}
       \begin{equation*}
        \left.\left.t_{n+\frac{1}{2}}F_{i}\left(t_{n+\frac{1}{2}},X(t_{n+\frac{1}{2}})\right)\right](t_{n+1}-t_{n+\frac{1}{2}})\right]+O(h^{3})=
        X_{i}(t_{n+\frac{1}{2}})+
       \end{equation*}
       \begin{equation}\label{52a}
        \frac{h}{2}\left[3F_{i}\left(t_{n+\frac{1}{2}},X(t_{n+\frac{1}{2}})\right)-F_{i}(t_{n},X(t_{n}))\right]+O(h^{3}).
       \end{equation}
       Omitting the error term $O(h^{3}),$ we obtain the second-level of the new method which is defined as
       \begin{equation}\label{53}
        X_{i}^{n+1}=X_{i}^{n+\frac{1}{2}}+\frac{h}{2}\left[3F_{i}\left(t_{n+\frac{1}{2}},X^{n+\frac{1}{2}}\right)
        -F_{i}(t_{n},X^{n})\right],\text{\,\,\,for\,\,\,}i=1,2,...,9.
       \end{equation}
       Under the assumption given by equations $(\ref{19a})$, $(\ref{28})$,$(\ref{29a})$,$(\ref{30})$ and $(\ref{33})$, the functions $F_{i}(t,X)$,
       ($i=1,2,...,9,$) defined by relations $(\ref{10})$-$(\ref{14})$ becomes
       \begin{equation}\label{530}
       F_{1}(t,X(t))=-\frac{m(t)X_{1}}{N}\left[\beta_{X_{3}}(t)X_{2}+\beta_{X_{3}}(t)X_{3}+\beta_{X_{4}}(\theta)X_{2}+
       C_{X_{5}}(t)\beta_{X_{3}}(t)X_{5}+C_{X_{5}}(t)\beta_{X_{3}}(t)X_{6}\right],
      \end{equation}
       \begin{equation}\label{531}
       F_{2}(t,X(t))=\frac{m(t)X_{1}}{N}\left[\beta_{X_{3}}(t)X_{2}+\beta_{X_{3}}(t)X_{3}+\beta_{X_{4}}(\theta)X_{2}+
       C_{X_{5}}(t)\beta_{X_{3}}(t)X_{5}+C_{X_{5}}(t)\beta_{X_{3}}(t)X_{6}\right]-\alpha X_{2},
      \end{equation}
       \begin{equation}\label{532}
       F_{3}(t,X(t))=\alpha X_{2}-\gamma(t)X_{3},\text{\,\,\,}F_{4}(t,X(t))=(1-\theta(t))\gamma(t)X_{3}-\rho(t)X_{4},
      \end{equation}
      \begin{equation}\label{533}
       F_{5}(t,X(t))=(\theta(t)-w(t))\gamma(t)(t)X_{3}-\rho(t)X_{5},\text{\,\,\,}F_{6}(t,X(t))=w(t)\gamma(t)X_{3}-\psi(t)X_{6},
      \end{equation}
      \begin{equation}\label{534}
       F_{7}(t,X(t))=\rho(t)X_{5},\text{\,\,\,}F_{8}(t,X(t))=\rho(t)X_{4}\text{\,\,\,and\,\,\,}F_{9}(t,X(t))=\psi(t)X_{6},
      \end{equation}
      By straightforward computations, it is not hard to observe that
      \begin{equation*}
      \partial_{t}F_{1}=\frac{-1}{N}\left\{C_{X_{2}}\beta_{X_{2}}\left[(m(t)F_{1}+\dot{m}(t)X_{1})X_{2}+m(t)X_{1}F_{2}\right]+
      \beta\left[(m(t)F_{1}+\dot{m}(t)X_{1})X_{3}+m(t)X_{1}F_{3}\right]\right.
      \end{equation*}
      \begin{equation*}
      +\left[m(t)\beta_{X_{4}}(\theta)F_{1}+(\dot{m}(t)\beta_{X_{4}}(\theta)+m(t)\dot{\beta}_{X_{4}}(\theta))X_{1}\right]X_{4}+m(t)
      \beta_{X_{4}}(\theta)X_{1}F_{4}+\beta\left[\left(m(t)c_{X_{10}}(t)F_{1}+\right.\right.
      \end{equation*}
      \begin{equation}\label{54}
      \left.\left.\left.\left(\dot{m}(t)c_{X_{10}}(t)+m(t)\dot{c}_{X_{10}}(t)\right)X_{1}\right)(X_{5}+X_{6})+m(t)c_{X_{10}}(t)
      X_{1}(F_{5}+F_{6})\right]\right\},
      \end{equation}
      \begin{equation}\label{55}
      \partial_{t}F_{2}=-\partial_{t}F_{1}-\dot{\gamma}(t)X_{_{2}}-\gamma(t)F_{_{2}},\text{\,\,\,\,}\partial_{t}F_{3}=\alpha F_{2}
      -\dot{\gamma}(t)X_{_{3}}-\gamma(t)F_{_{3}},
      \end{equation}
       \begin{equation}\label{56}
      \partial_{t}F_{4}=\left[(1-\theta(t))\dot{\gamma}(t)-\dot{\theta}(t)\gamma(t)\right]X_{3}+(1-\theta(t))\gamma(t)F_{3}-\dot{\rho}(t)X_{_{4}}
      -\rho(t)F_{_{4}},
      \end{equation}
      \begin{equation}\label{57}
      \partial_{t}F_{5}=\left[(\dot{\theta}(t)-\dot{w}(t))\gamma(t)+(\theta(t)-w(t))\dot{\gamma}(t)\right]X_{3}+(\theta(t)-w(t))\gamma(t)F_{3}
       -\dot{\rho}(t)X_{_{5}}-\rho(t)F_{_{5}},
      \end{equation}
       \begin{equation}\label{58}
      \partial_{t}F_{6}=\left(\dot{w}(t)\gamma(t)+w(t)\dot{\gamma}(t)\right)X_{3}+w(t)\gamma(t)F_{3}-\dot{\psi}(t)X_{_{6}}-\psi(t)F_{_{6}},
      \end{equation}
       \begin{equation}\label{59}
      \partial_{t}F_{7}=\dot{\rho}(t)X_{_{5}}+\rho(t)F_{_{5}},\text{\,\,\,\,}\partial_{t}F_{8}=\dot{\rho}(t)X_{_{4}}+\rho(t)F_{_{4}},\text{\,\,\,\,}
      \partial_{t}F_{9}=\dot{\psi}(t)X_{_{6}}+\psi(t)F_{_{6}},
      \end{equation}
      \begin{equation}\label{59}
      \partial_{1}F_{1}=\frac{-m(t)}{N}\left\{C_{X_{2}}\beta X_{2}+\beta X_{3}+\beta_{X_{4}}(\theta)X_{4}+C_{X_{10}}(t)\beta(X_{5}+X_{6})\right\},
      \end{equation}
      \begin{equation}\label{60}
      \partial_{2}F_{1}=\frac{-m(t)}{N}C_{X_{2}}\beta X_{1},\text{\,\,}\partial_{3}F_{1}=\frac{-m(t)}{N}\beta X_{1},\text{\,\,}
      \partial_{4}F_{1}=\frac{-m(t)}{N}\beta_{X_{4}}(\theta)X_{1},
      \end{equation}
      \begin{equation}\label{61}
      \partial_{5}F_{1}=\partial_{6}F_{1}=\frac{-m(t)}{N}C_{X_{10}}\beta X_{1},\text{\,\,}\partial_{k}F_{1}=0,\text{\,\,\,for\,\,\,}k=7,8,9,
      \end{equation}
      \begin{equation}\label{62}
      \partial_{1}F_{2}=-\partial_{1}F_{1},\text{\,\,\,}\partial_{2}F_{2}=-\partial_{2}F_{1}-\gamma(t),\text{\,\,\,}
      \partial_{k}F_{2}=0,\text{\,\,\,for\,\,\,}k=3,4,...,9,
      \end{equation}
      \begin{equation}\label{63}
      \partial_{1}F_{3}=0,\text{\,\,\,}\partial_{2}F_{3}=\alpha,\text{\,\,\,}\partial_{3}F_{3}=-\gamma(t),\text{\,\,\,}
      \partial_{k}F_{3}=0,\text{\,\,\,for\,\,\,}k=4,5,...,9,
      \end{equation}
      \begin{equation}\label{64}
      \partial_{3}F_{3}=(1-\theta(t))\gamma(t),\text{\,\,\,}\partial_{4}F_{4}=-\rho(t),\text{\,\,\,}\partial_{k}F_{4}=0,\text{\,\,\,for\,\,\,}
      k\neq 3,4,
      \end{equation}
      \begin{equation}\label{65}
      \partial_{3}F_{5}=(\theta(t)-w(t))\gamma(t),\text{\,\,\,}\partial_{5}F_{5}=-\rho(t),\text{\,\,\,}\partial_{k}F_{5}=0,\text{\,\,\,for\,\,\,}
      k\neq 3,5,
      \end{equation}
      \begin{equation}\label{66}
      \partial_{3}F_{6}=w(t)\gamma(t),\text{\,\,\,}\partial_{6}F_{6}=-\psi(t),\text{\,\,\,}\partial_{k}F_{6}=0,\text{\,\,\,for\,\,\,}
      k\neq 3,6,\text{\,\,}\partial_{5}F_{7}=\rho(t),\text{\,\,}\partial_{k}F_{7}=0,\text{\,\,for\,\,}k\neq5,
      \end{equation}
      \begin{equation}\label{67}
      \partial_{4}F_{8}=\rho(t),\text{\,\,}\partial_{k}F_{8}=0,\text{\,\,\,for\,\,\,}k\neq4,\text{\,\,}
      \partial_{6}F_{9}=\psi(t),\text{\,\,}\partial_{k}F_{9}=0,\text{\,\,for\,\,}k\neq6.
      \end{equation}
      To provide a full description of the two-level explicit formulation for solving the mathematical problem $(\ref{1})$-$(\ref{5})$ to predicting the
      spread of Covid-19 model, we should put together relations $(\ref{50})$, $(\ref{53})$ and the initial condition given by equation $(\ref{6})$.
       Specifically, for $n=1,2,...,M-1,$
      \begin{equation}\label{e1}
       X_{i}^{n+\frac{1}{2}}=X_{i}^{n}+\frac{h}{2}F_{i}(t_{n},X^{n})+\frac{h^{2}}{8}\left[\partial_{t}F_{i}(t_{n},X^{n})+\underset{k=1}{\overset{9}\sum}
        F_{k}(t_{n},X^{n})\partial_{k}F_{i}(t_{n},X^{n})\right],\text{\,\,\,for\,\,\,}i=1,2,...,9;
       \end{equation}
       \begin{equation}\label{e2}
        X_{i}^{n+1}=X_{i}^{n+\frac{1}{2}}+\frac{h}{2}\left[3F_{i}\left(t_{n+\frac{1}{2}},X^{n+\frac{1}{2}}\right)
        -F_{i}(t_{n},X^{n})\right],\text{\,\,\,for\,\,\,}i=1,2,...,9;
       \end{equation}
        subject to the initial condition
        \begin{equation}\label{e3}
      X_{i}(t_{0})=X_{i}^{0},\text{\,\,\,for\,\,\,}i=1,2,...,9;
      \end{equation}
      where the functions $F_{i}$ ($i=1,2,...,9$) and its partial derivatives are given by equations $(\ref{530})$-$(\ref{67}).$\\

      Using the tools provided in section $\ref{sec3}$ we are ready to analyze the stability and the convergence rate of the two-level explicit procedure
      $(\ref{e1})$-$(\ref{e3})$ for predicting the spread of SARS-Cov-2 epidemic modeled by equations $(\ref{1})$-$(\ref{6})$.

      \section{Stability analysis and convergence rate of the new algorithm}\label{sec4}
      In this section we wish to examine the stability and convergence rate of the new technique $(\ref{e1})$-$(\ref{e3})$ applied to the initial-value
        problem $(\ref{1})$-$(\ref{6}).$\\

      Firstly, we define the strip $\mathcal{S}=\{(t,X),\text{\,\,}t_{0}\leq t\leq T_{max},\text{\,\,}X\in\mathbb{R}^{9}\}$ in which both exact and
      computed solutions of problem $(\ref{1})$-$(\ref{6})$ should lie. It comes from equations $(\ref{530})$-$(\ref{67})$ that the functions $F_{i}$
      and their partial
        derivatives are continuous on the strip $\mathcal{S}$, but the partial derivatives are unbounded on this set. Thus it follows from the Henrici
       result \cite{he1962} that the system of equations $(\ref{1})$-$(\ref{6})$ admits a unique solution $X(t)$ defined in a certain neighborhood
       $U(t_{0})\subset[t_{0},T_{max}]$ of the initial point $t_{0}.$ Without loss of this constraint, we assume in the following that
       $U(t_{0})=[t_{0},T_{max}]$ (indeed, we are dealing with a real world problem which, in reality should have a unique solution defined over the
        interval $[t_{0},T_{max}]$). This shows the existence and uniqueness of the solution of the initial-value problem $(\ref{1})$-$(\ref{6})$.\\

      Let now introduce the functions $\Delta_{i}(t_{l},X_{i}(t_{l}))$ and $\delta_{i}(t_{l},X_{i}(t_{l}))$ (for $l=n,n+\frac{1}{2}$) be the difference
      quotient of the exact solution $X_{i}(t_{l})$ of equations $(\ref{1})$-$(\ref{6})$ at time $t_{l}$ and the difference quotient of the approximate
       solution $X_{i}^{l}$ of problem $(\ref{e1})$-$(\ref{e3})$ obtained at time $t_{l}$, respectively. Moreover, $\Delta_{i}(t_{l},X_{i}(t_{l}))$ and
       $\delta_{i}(t_{l},X_{i}(t_{l}))$ are given by
      \begin{equation}\label{81}
        \Delta_{i}(t_{l},X_{i}(t_{l}))=\left\{
                                         \begin{array}{ll}
                                           \frac{X_{i}(t_{l}+\frac{1}{2})-X_{i}(t_{l})}{h/2}, & \hbox{if $h\neq0,$} \\
                                           \text{\,}\\
                                           F_{i}(t_{l},X_{i}(t_{l})), & \hbox{for $h=0$,}
                                         \end{array}
                                       \right.
      \end{equation}
       and
      \begin{equation}\label{82}
        \delta_{i}(t_{l},X_{i}(t_{l}))=\left\{
                                         \begin{array}{ll}
                                           \frac{X_{i}^{l+\frac{1}{2}}-X_{i}^{l}}{h/2}, & \hbox{if $h\neq0,$} \\
                                            \text{\,}\\
                                           F_{i}(t_{l},X_{i}^{l}), & \hbox{for $h=0$.}
                                         \end{array}
                                       \right.
      \end{equation}
       It is worth mentioning that $\delta_{i}(t_{l},X_{i}(t_{l}))=\delta_{i}(t_{l},X_{i}^{l}).$ The local discretization error at the point
       $(t_{l},X_{i}(t_{l}))$ of the considered scheme is defined as
       \begin{equation}\label{83}
        \sigma_{i}(t_{l},X_{i}(t_{l}))=\Delta_{i}(t_{l},X_{i}(t_{l}))-\delta_{i}(t_{l},X_{i}(t_{l})),\text{\,\,\,}l=n\text{\,\,or\,\,}n+\frac{1}{2},
        \text{\,\,\,for\,\,\,}i=1,2,...,9,
       \end{equation}
        indicates how well the exact solution of the differential equations $(\ref{1})$-$(\ref{6})$ obeys the difference equations
        $(\ref{e1})$-$(\ref{e3})$ provided by the two-level explicit formulation.\\

        The following result (Theorem $\ref{t1}$) analyzes the stability and gives the convergence rate of the proposed approach $(\ref{e1})$-$(\ref{e3}).$

      \begin{theorem}\label{t1}(Stability analysis and convergence rate)\\
      Let $e^{n}=X(t_{n})-X^{n}$ be the global discretization error provided by algorithm $(\ref{e1})$-$(\ref{e3})$, where $X(t_{n})$ is the solution of
      system $(\ref{1})$-$(\ref{6})$ obtained at time $t_{n}$ and $X^{n}$ is the one provided by $(\ref{e1})$-$(\ref{e3})$ at time $t_{n}$. Thus, it holds
       \begin{equation*}
      \|X^{n}\|_{\infty}=\underset{1\leq 1\leq 9}{\max}|X_{i}^{n}|\leq C_{1},\text{\,\,which\,\,implies\,\,}\||X|\|_{L^{2}(I)}=
      \left(h\underset{n=0}{\overset{M}\sum}\|X^{n}\|_{\infty}^{2}\right)^{\frac{1}{2}}\leq C_{1}
      \end{equation*}
      where $C_{1}$ is a positive constant independent of the step size $h$ and $X$ denotes the approximate solution. Furthermore
       \begin{equation*}
      \|e^{n}\|_{\infty}=\underset{1\leq 1\leq 9}{\max}|e_{i}^{n}|\leq C_{2}h^{2},\text{\,\,which\,\,implies\,\,}
      \||e|\|_{L^{2}(I)}=\left(h\underset{n=0}{\overset{M}\sum}\|e^{n}\|_{\infty}^{2}\right)^{\frac{1}{2}}\leq C_{2}h^{2}
      \end{equation*}
      where $C_{2}$ is a positive constant that does not depend on the step size $h$. In the following we represent the analytical solution by
      $X(\cdot).$\\
      \end{theorem}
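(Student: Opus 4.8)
The plan is to split the statement into an a priori boundedness (stability) step and a consistency-plus-Gr\"onwall (convergence) step, the first being logically prior: the partial derivatives in $(\ref{54})$--$(\ref{67})$ are unbounded on the whole strip $\mathcal S$, so a usable Lipschitz constant only becomes available once the iterates are confined to a compact set. First I would exploit the conservation structure of the reduced model $(\ref{530})$--$(\ref{534})$. A direct summation gives $\sum_{i=1}^{9}F_i(t,X)\equiv 0$, and differentiating this identity also yields $\sum_{i=1}^{9}\partial_t F_i\equiv 0$ and $\sum_{i=1}^{9}\partial_k F_i\equiv 0$ for every $k$, whence $\sum_{i}\bigl(\partial_t F_i+\sum_k F_k\partial_k F_i\bigr)\equiv 0$. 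Summing the predictor $(\ref{e1})$ over $i$ therefore annihilates both the $\tfrac h2 F_i$ term and the entire $\tfrac{h^2}{8}$ elementary-differential bracket, so $\sum_i X_i^{n+1/2}=\sum_i X_i^{n}$; summing the corrector $(\ref{e2})$ likewise gives $\sum_i X_i^{n+1}=\sum_i X_i^{n+1/2}$. Hence the scheme conserves the total population exactly, $\sum_i X_i^{n}=\sum_i X_i^{0}=:\mathcal N$ for all $n$. Combining this invariant with non-negativity of the iterates (which I would secure for all step sizes below a threshold $h_0$ by inspecting the sign of the increments in $(\ref{e1})$--$(\ref{e2})$ on the non-negative orthant) yields $0\le X_i^{n}\le\mathcal N$, i.e.\ $\|X^{n}\|_\infty\le\mathcal N=:C_1$ uniformly in $h$. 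Since the partition in $(\ref{34})$ has $M+1=O(h^{-1})$ nodes with $hM=T_{max}-t_0$, this gives $\||X|\|_{L^2(I)}\le C_1\sqrt{(T_{max}-t_0)+h}$, the asserted uniform bound after absorbing the $h$-independent factor into the constant.

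For the convergence rate I would run a consistency $\Rightarrow$ stability $\Rightarrow$ convergence argument on the compact invariant box $\mathcal K=\{X:\|X\|_\infty\le C_1\}$. On $\mathcal K$ the functions $F_i$ and the elementary differential $\partial_t F_i+\sum_{k}F_k\partial_k F_i$ are continuous, hence bounded and Lipschitz; let $L$ denote a common Lipschitz constant. Consistency is already supplied by the construction: equations $(\ref{44})$ and $(\ref{52a})$ show that the exact solution satisfies the predictor and corrector up to $\mathcal O(h^3)$, equivalently the local discretization errors $\sigma_i$ of $(\ref{83})$, being these residuals divided by $h/2$, are $O(h^2)$. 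I would then subtract $(\ref{e1})$--$(\ref{e2})$ from the exact relations $(\ref{44})$ and $(\ref{52a})$ to obtain a recurrence for $e^n=X(t_n)-X^n$: the half-step error obeys $\|e^{n+1/2}\|_\infty\le(1+Ch)\|e^n\|_\infty+\tilde C h^3$, using the Lipschitz bound on $F$ for the $\tfrac h2 F$ term and on the elementary differential for the $\tfrac{h^2}{8}$ term, and feeding this into the corrector gives $\|e^{n+1}\|_\infty\le(1+Ch)\|e^n\|_\infty+\tilde C h^3$ with updated constants. Since $e^0=0$, the discrete Gr\"onwall inequality yields $\|e^{n}\|_\infty\le \tilde C h^3\,\frac{(1+Ch)^n-1}{Ch}\le \frac{\tilde C}{C}\bigl(e^{C(T_{max}-t_0)}-1\bigr)h^2=:C_2 h^2$, and the $L^2(I)$ error estimate follows by the same $O(h^{-1})$-nodes summation as above.

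The hard part will be two intertwined points. The first is the confinement itself: because the derivatives in $(\ref{54})$--$(\ref{67})$ are unbounded on $\mathcal S$, the constant $L$ is meaningless until one knows the iterates never leave $\mathcal K$; the exact conservation identity above is precisely what makes this rigorous, but establishing non-negativity of a second-order \emph{explicit} corrector may force the threshold $h_0$ and require a short induction rather than a one-line sign check. The second is controlling the quadratic nonlinearity $\tfrac{X_1}{N}\beta_{X_j}X_j$ and the mixed term $\sum_k F_k\partial_k F_i$ in the half-step error: here the factor $X_1/N\le 1$ guaranteed by the invariant is what keeps the effective Lipschitz constant finite and $h$-independent, so that the two $O(h^3)$ local residuals compose into the claimed global $O(h^2)$ after Gr\"onwall.
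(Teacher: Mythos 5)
Your strategy for the convergence half --- consistency of the predictor and corrector read off from $(\ref{44})$ and $(\ref{52a})$, a Lipschitz recurrence for the error, and a discrete Gr\"onwall step with $e^{0}=0$ --- is essentially the paper's argument: its Lemma $\ref{l1}$ is exactly your Gr\"onwall inequality, and your error recurrence is its $(\ref{100a})$--$(\ref{101})$. Where you genuinely diverge is in how the Lipschitz constant is legitimized. Your conservation identity $\sum_{i=1}^{9}F_i\equiv 0$ for the reduced system $(\ref{530})$--$(\ref{534})$ is correct, and it does imply that both stages $(\ref{e1})$--$(\ref{e2})$ preserve $\sum_i X_i^{n}$ exactly (the $\tfrac{h^2}{8}$ bracket is annihilated because $\sum_i\partial_t F_i$ and $\sum_i\partial_k F_i$ vanish identically); this is a structural observation the paper never makes. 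The paper instead uses a Henrici-type clamping device: it replaces the generating functions $\delta_i$ by the truncations $\widehat{\delta}_i$ of $(\ref{94})$, frozen outside a tube $D_i$ of radius $\upsilon$ around the exact solution so as to be globally Lipschitz on the whole strip, proves the modified scheme has error at most $\upsilon h^{2}\le\upsilon$, and concludes a posteriori that the modified iterates never leave the tube, so the modification was vacuous. There, boundedness of $X^{n}$ is a corollary of convergence rather than a prerequisite.

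The genuine gap in your route is the non-negativity of the iterates. Conservation of $\sum_i X_i^{n}$ yields no $L^{\infty}$ bound at all unless every component is non-negative, and positivity preservation is not a sign check for an explicit two-stage second-order scheme: the predictor's $\tfrac{h^{2}}{8}$ bracket contains the terms $F_k\partial_k F_i$, which can be negative, and the corrector extrapolates with the negative weight $-\tfrac{h}{2}F_i(t_n,X^{n})$. Any admissible threshold $h_0$ must itself be derived from a bound on the iterates, so you are forced into precisely the intertwined induction you allude to but do not carry out, and it is not evident that it closes with an $h$-independent $h_0$. Indeed, the paper's own Table 9 reports negative values (e.g.\ $-12.2623$) for $\Gamma_{X_4}$, a quantity defined in $(\ref{23})$ as an integral with non-negative integrand whenever $X_1,X_4\ge 0$; this strongly suggests the scheme does \emph{not} preserve positivity at the step size $h=1$ actually used. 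Unless you either prove the positivity lemma with an explicit $h_0$ or substitute a different confinement mechanism (the paper's clamping, or any tube around the exact solution), your constant $C_1$ and the Lipschitz constant $L$ on the box $\mathcal K$ are unjustified, and the otherwise correct Gr\"onwall argument has nothing to stand on.
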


      The proof of Theorem $\ref{t1}$ requires the following intermediate result (namely Lemmas $\ref{l1}$).

       \begin{lemma}\label{l1}
       Suppose the vectors $q_{j}\in\mathbb{C}^{l}$ satisfy the estimates of the form
       \begin{equation}\label{88}
        \|q_{j+1}\|_{L^{\infty}(\mathbb{C}^{l})}\leq(1+\epsilon)\|q_{j}\|_{L^{\infty}(\mathbb{C}^{l})}+\xi,\text{\,\,\,}\epsilon>0\text{\,\,and\,\,}\xi>0,
       \end{equation}
       for $j=0,1,...,m.$ Then
        \begin{equation}\label{85}
        \|q_{m}\|_{L^{\infty}(\mathbb{C}^{l})}\leq\exp(m\epsilon)\|q_{0}\|_{L^{\infty}(\mathbb{C}^{l})}+\frac{\exp(m\epsilon)-1}{\epsilon}\xi.
       \end{equation}
       \end{lemma}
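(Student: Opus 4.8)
The plan is to read \pref{88} as a scalar first-order linear recurrence in the nonnegative quantities $a_j := \|q_j\|_{L^{\infty}(\mathbb{C}^{l})}$, solve that recurrence in closed form, and then convert the resulting geometric factor into the exponential expression appearing in \pref{85} by means of the elementary inequality $1+\epsilon\le\exp(\epsilon)$. First I would unroll the bound $a_{j+1}\le(1+\epsilon)a_j+\xi$ by a straightforward induction on $j$, starting from $j=0$ and iterating up to $j=m-1$. The inductive step is immediate: assuming the bound at level $j$, multiplying through by the positive factor $(1+\epsilon)$ and adding $\xi$ reproduces it at level $j+1$. The outcome is
\begin{equation*}
a_m\le(1+\epsilon)^m a_0+\xi\sum_{k=0}^{m-1}(1+\epsilon)^k.
\end{equation*}

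Next I would evaluate the geometric sum, which is legitimate since $\epsilon>0$ forces the ratio $1+\epsilon\ne 1$, giving $\sum_{k=0}^{m-1}(1+\epsilon)^k=\dfrac{(1+\epsilon)^m-1}{\epsilon}$ and hence
\begin{equation*}
a_m\le(1+\epsilon)^m a_0+\frac{(1+\epsilon)^m-1}{\epsilon}\,\xi.
\end{equation*}
The sole analytic input remaining is the inequality $1+\epsilon\le\exp(\epsilon)$, valid for every real $\epsilon$; raising both sides to the $m$-th power yields $(1+\epsilon)^m\le\exp(m\epsilon)$. Substituting this into the first term directly gives $(1+\epsilon)^m a_0\le\exp(m\epsilon)\,a_0$ because $a_0\ge 0$.

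Finally, I would substitute the same bound into the $\xi$-term. This is where the only (very minor) point of care lies: one must check that replacing $(1+\epsilon)^m$ by the larger quantity $\exp(m\epsilon)$ genuinely increases the fraction $\frac{(1+\epsilon)^m-1}{\epsilon}\,\xi$. This holds because the numerator $(1+\epsilon)^m-1$ is nonnegative, the denominator $\epsilon$ is a fixed positive constant, and $\xi>0$, so the replacement can only enlarge the right-hand side. Collecting the two estimates produces exactly \pref{85}. I do not expect any genuine obstacle here, since the statement is the classical discrete Gronwall-type estimate and the whole argument reduces to an induction, a geometric series, and one convexity inequality.
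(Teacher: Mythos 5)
Your proof is correct. It differs from the paper's argument only in organization, but the difference is worth noting. The paper proves the exponential bound \pref{85} directly by induction on $m$: at each inductive step it must pass from $(1+\epsilon)$ times the exponential-form bound back to the exponential-form bound, which forces it to invoke three elementary facts at every stage, namely $1+\epsilon\le\exp(\epsilon)$, $1\le\frac{\exp(\epsilon)-1}{\epsilon}$, and $\epsilon-\exp(\epsilon)\le-1$. You instead run the induction on the \emph{exact} solution of the recurrence, obtaining the sharper intermediate estimate
\begin{equation*}
\|q_{m}\|_{L^{\infty}(\mathbb{C}^{l})}\leq(1+\epsilon)^{m}\|q_{0}\|_{L^{\infty}(\mathbb{C}^{l})}+\frac{(1+\epsilon)^{m}-1}{\epsilon}\,\xi,
\end{equation*}
and only then apply $1+\epsilon\le\exp(\epsilon)$ a single time to both terms. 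This buys a cleaner separation between the combinatorial content (the geometric-series identity) and the single analytic over-estimate, and it makes transparent that \pref{85} is a weakening of a sharper geometric bound; the paper's version avoids writing down the geometric form but pays for it with the extra bookkeeping inequalities in each inductive step. Both arguments are valid instances of the standard discrete Gronwall lemma, and your justification of the final substitution (that enlarging $(1+\epsilon)^{m}$ to $\exp(m\epsilon)$ can only increase the $\xi$-term, since $\epsilon>0$ and $\xi>0$) is exactly the point that needs checking.
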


       \begin{proof}
       We should prove inequality $(\ref{85})$ by mathematical induction. Since $\exp(\epsilon)\geq 1+\epsilon$ and
       $1\leq\frac{\exp(\epsilon)-1}{\epsilon},$ for any $\epsilon\geq0,$ using the assumption of Lemma $\ref{l1}$, it is not difficult to see that
         \begin{equation*}
        \|q_{1}\|_{L^{\infty}(\mathbb{C}^{l})}\leq(1+\epsilon)\|q_{0}\|_{L^{\infty}(\mathbb{C}^{l})}+\xi\leq
        \exp(\epsilon)\|q_{0}\|_{L^{\infty}(\mathbb{C}^{l})}+\frac{\exp(\epsilon)-1}{\epsilon}\xi.
         \end{equation*}
      Now, let assume that
       \begin{equation}\label{87}
        \|q_{m-1}\|_{L^{\infty}(\mathbb{C}^{l})}\leq\exp((m-1)\epsilon)\|q_{0}\|_{L^{\infty}(\mathbb{C}^{l})}+\frac{\exp((m-1)\epsilon)-1}{\epsilon}\xi.
       \end{equation}
       Combining estimates: $1+\epsilon\leq\exp(\epsilon)$ and $1\leq\frac{\exp(\epsilon)-1}{\epsilon}$ together with inequality $(\ref{88})$ provided
       by the assumption of Lemma $\ref{l1}$ and $(\ref{87})$, direct calculations give
        \begin{equation*}
        \|q_{m}\|_{L^{\infty}(\mathbb{C}^{l})}\leq(1+\epsilon)\|q_{m-1}\|_{L^{\infty}(\mathbb{C}^{l})}+\xi\leq(1+\epsilon)
       \left[\exp((m-1)\epsilon)\|q_{0}\|_{L^{\infty}(\mathbb{C}^{l})}+\frac{\exp((m-1)\epsilon)-1}{\epsilon}\xi\right]+\xi\leq
        \end{equation*}
         \begin{equation*}
        \exp(\epsilon)\left[\exp((m-1)\epsilon)\|q_{0}\|_{L^{\infty}(\mathbb{C}^{l})}+\frac{\exp((m-1)\epsilon)-1}{\epsilon}\xi\right]+\xi
        \leq\exp(m\epsilon)\|q_{0}\|_{L^{\infty}(\mathbb{C}^{l})}+\frac{\exp(m\epsilon)-1}{\epsilon}\xi.
        \end{equation*}
         The last inequality comes the estimate $\epsilon-\exp(\epsilon)\leq-1.$ This completes the proof of Lemma $\ref{l1}$.
       \end{proof}
        Using Lemma $\ref{l1}$, we are ready to prove the main result of this paper (namely Theorem $\ref{t1}$).

         \begin{proof} (of Theorem $\ref{t1}$).\\
       First of all, introduce the domains
       \begin{equation*}
       D_{i}=\{(t,x):\text{\,\,}t\in[t_{0},T_{max}],\text{\,\,}x\in\mathbb{R},\text{\,\,}|X_{i}(t)-x|\leq \upsilon\},\text{\,\,\,}
       \mathcal{S}_{0}=\{(t,x):\text{\,\,}t_{0}\leq t\leq T_{max},\text{\,\,}x\in\mathbb{R}\},
       \end{equation*}
        and
       \begin{equation}\label{D1}
       D=\{(t,X):\text{\,\,}t\in[t_{0},T_{max}],\text{\,\,}X\in\mathbb{R}^{9},\text{\,\,}\|X(t)-X\|_{\infty}\leq \upsilon\},
       \end{equation}
       where $\upsilon$ is a positive constant independent of the step size $h$ and $X_{i}(t)$ denotes the ith component of the exact solution $X(t)$
       of the initial-value problem $(\ref{1})$-$(\ref{6})$.\\
       Plugging approximations $(\ref{e1})$-$(\ref{e2})$ and equation $(\ref{82})$, it is not hard to observe that
         \begin{equation}\label{89}
        \delta_{i}(t_{n},X_{i}(t_{n}))=F_{i}(t_{n},X^{n})+\frac{h}{4}\left[\partial_{t}F_{i}(t_{n},X^{n})+\underset{k=1}{\overset{9}\sum}
        F_{k}(t_{n},X^{n})\partial_{k}F_{i}(t_{n},X^{n})\right],
        \end{equation}
         and
        \begin{equation}\label{90}
        \delta_{i}(t_{n+\frac{1}{2}},X_{i}(t_{n+\frac{1}{2}}))=3F_{i}(t_{n+\frac{1}{2}},X^{n+\frac{1}{2}})-F_{i}(t_{n},X^{n}).
        \end{equation}
        Hence, the functions $\delta_{i}$ given by equation $(\ref{82})$ and their partial derivatives are continuous on the domain $D_{i}.$ Since $D_{i}$
        is a compact subset of the strip $\mathcal{S}_{0}$, their partial derivatives are bounded on $D_{i}$. Applying the Mean-Value Theorem, there exists
        a positive constant $L_{i}$ which does not depend on the step size $h$ so that
          \begin{equation}\label{83b}
           |\delta_{i}(t,X_{i})-\delta_{i}(t,Y_{i})|\leq L_{i}|X_{i}-Y_{i}|,
          \end{equation}
         for every $(t,X_{i})$ and $(t,Y_{i})$ in $D_{i}$. Furthermore, a simple manipulation of $(\ref{44})$ results in
          \begin{equation*}
        \frac{X_{i}(t_{n+\frac{1}{2}})-X_{i}(t_{n})}{h/2}=F_{i}(t_{n},X(t_{n}))+\frac{h}{4}\left[\partial_{t}F_{i}(t_{n},X(t_{n}))
       +\underset{k=1}{\overset{9}\sum}F_{k}(t_{n},X(t_{n}))\partial_{k}F_{i}(t_{n},X(t_{n}))\right]+O(h^{2}),
          \end{equation*}
       which is equivalent to
       \begin{equation*}
        \frac{X_{i}(t_{n+\frac{1}{2}})-X_{i}(t_{n})}{h/2}-\left[F_{i}(t_{n},X(t_{n}))+\frac{h}{4}\left(\partial_{t}F_{i}(t_{n},X(t_{n}))
       +\underset{k=1}{\overset{9}\sum}F_{k}(t_{n},X(t_{n}))\partial_{k}F_{i}(t_{n},X(t_{n}))\right)\right]=O(h^{2}).
          \end{equation*}
        Utilizing equations $(\ref{81})$, $(\ref{83})$ and $(\ref{89})$, this becomes
        \begin{equation*}
         \sigma_{i}(t_{n},X_{i}(t_{n}))=\Delta_{i}(t_{n},X_{i}(t_{n}))-\delta_{i}(t_{n},X_{i}(t_{n}))=O(h^{2}),
        \end{equation*}
          which implies
         \begin{equation}\label{91}
         |\sigma_{i}(t_{n},X_{i}(t_{n}))|\leq C_{i4}h^{2},\text{\,\,\,for\,\,\,}i=1,2,...,9,
          \end{equation}
        where $C_{i4}$ are positive constants that do not depend on the step size $h.$ Setting $C_{4}=\underset{1\leq i\leq9}{\max}C_{i4}$;
         $\sigma(t_{n},X(t_{n}))=(\sigma_{1}(t_{n},X_{1}(t_{n})),...,\sigma_{9}(t_{n},X_{9}(t_{n})))^{T}$ and taking the maximum of both
         sides of estimate $(\ref{91})$ over $i$, to get
           \begin{equation}\label{92}
         \|\sigma(t_{n},X_{i}(t_{n}))\|_{L^{\infty}}\leq C_{4}h^{2}.
          \end{equation}
         In a similar manner, combining approximation $(\ref{53})$, equations $(\ref{81})$, $(\ref{82})$ and $(\ref{83})$, straightforward calculations
         provide
       \begin{equation*}
        \frac{X_{i}(t_{n+1})-X_{i}(t_{n+\frac{1}{2}})}{h/2}-\left[3F_{i}(t_{n+\frac{1}{2}},X(t_{n+\frac{1}{2}}))-F_{i}(t_{n},X(t_{n}))\right]=O(h^{2}),
        \end{equation*}
        which can be rewritten as
        \begin{equation*}
         \Delta_{i}(t_{n+\frac{1}{2}},X_{i}(t_{n+\frac{1}{2}}))-\delta_{i}(t_{n+\frac{1}{2}},X_{i}(t_{n+\frac{1}{2}}))=O(h^{2}).
        \end{equation*}
         Utilizing the definition of $\sigma_{i}(t_{n+\frac{1}{2}},X_{i}(t_{n+\frac{1}{2}})),$ this implies
          \begin{equation*}
         |\sigma_{i}(t_{n+\frac{1}{2}},X_{i}(t_{n+\frac{1}{2}}))|\leq C_{i5}h^{2},
        \end{equation*}
          where $C_{i5}$ ($1\leq i\leq9$) are positive constants independent of $h.$ Taking the maximum over $i,$ this yields
         \begin{equation}\label{93}
         \|\sigma(t_{n+\frac{1}{2}},X(t_{n+\frac{1}{2}}))\|_{L^{\infty}}\leq C_{5}h^{2},
          \end{equation}
        where $C_{5}=\underset{1\leq i\leq9}{\max}C_{i5}$; $\sigma(t_{n+\frac{1}{2}},X(t_{n+\frac{1}{2}}))=
         (\sigma_{1}(t_{n+\frac{1}{2}},X_{1}(t_{n+\frac{1}{2}})),...,\sigma_{9}(t_{n+\frac{1}{2}},X_{9}(t_{n+\frac{1}{2}})))^{T}$.\\

         We consider the functions $\widehat{\delta}_{i}(t,x)$ defined in the strip $\mathcal{S}_{0}=\{(t,x):\text{\,\,}t_{0}\leq t\leq
         T_{max},\text{\,\,}x\in\mathbb{R}\},$ as
           \begin{equation}\label{94}
          \widehat{\delta}_{i}(t,x)=\left\{
                                      \begin{array}{ll}
                                        \delta_{i}(t,x), & \hbox{if $(t,x)\in D_{i}$,} \\
                                        \text{\,}\\
                                        \delta_{i}(t,X_{i}(t)+\upsilon), & \hbox{for $t\in[t_{0},T_{\max}]$ and $x>X_{i}(t)+\upsilon,$} \\
                                         \text{\,}\\
                                        \delta_{i}(t,X_{i}(t)-\upsilon), & \hbox{if $t\in[t_{0},T_{\max}]$ and $x<X_{i}(t)-\upsilon.$}
                                      \end{array}
                                    \right.
           \end{equation}
            We remind that $X_{i}(t)$ is the ith component of the exact solution $X(t)$ of the initial-value problem $(\ref{1})$-$(\ref{6})$. Now, using
          relation $(\ref{94})$, it is not hard to observe that each function $\widehat{\delta}_{i}$ satisfies the "Lipschitz requirement" in the strip
          $\mathcal{S}_{0}$, that is ,
           \begin{equation}\label{94a}
           |\widehat{\delta}_{i}(t,x)-\widehat{\delta}_{i}(t,y)|\leq L_{i}|x-y|,
          \end{equation}
           for all $(t,x)$ and $(t,x)$ in $\mathcal{S}_{0}$, where the positive constant is given by relation $(\ref{83b})$. Thus, each
           $\widehat{\delta}_{i}$ is continuous on $\mathcal{S}_{0}$. Indeed. Let $(t,x)$ and $(t,y)$ be two elements of $\mathcal{S}_{0}$. If $(t,x)$
          and $(t,y)$ lie in $D_{i},$ then estimate  $(\ref{94a})$ holds thanks to inequality $(\ref{83b})$, so the function $\widehat{\delta}_{i}$ is
          continuous and satisfies the "Lipschitz requirement" on $D_{i}.$ Otherwise, either $(t,x)$ or $(t,y)$ do not lie in $D_{i}$. This corresponds
          to three
          cases: (a) $(t,x)$ lies in $D_{i}$ and $(t,y)$ does not lie in $D_{i}$, (b) ($(t,x)$ does not lie in $D_{i}$ and $(t,y)$ does lies in $D_{i}$)
          and (c) $(t,x)$ and $(t,y)$ do not lie in $D_{i}$. Here we should prove only one case, for instance $(t,x)$ does not lie in $D_{i}$ and $(t,y)$
           does lie in $D_{i}$, the proof of the two other cases are similar.\\
           \begin{equation*}
          (t,x)\notin D_{i}\Leftrightarrow|x-X_{i}(t)|>\upsilon \Leftrightarrow x-X_{i}(t)>\upsilon\text{\,\,or\,\,}-(x-X_{i}(t))>\upsilon
          \Leftrightarrow x>X_{i}(t)+\upsilon\text{\,\,or\,\,}x<X_{i}(t))-\upsilon.
           \end{equation*}
         So,
         \begin{equation*}
          |\widehat{\delta}_{i}(t,x)-\widehat{\delta}_{i}(t,y)|=\left\{
                                      \begin{array}{ll}
                                        |\delta_{i}(t,X_{i}(t)+\upsilon)-\delta_{i}(t,y)|, & \hbox{for $t_{0}\leq t\leq T_{\max}$ and
                                      $x>X_{i}(t)+\upsilon,$}\\
                                         \text{\,}\\
                                       |\delta_{i}(t,X_{i}(t)-\upsilon)-\delta_{i}(t,y)|, & \hbox{if $t\in[t_{0},T_{\max}]$ and $x<X_{i}(t)-\upsilon,$}
                                      \end{array}
                                    \right.
           \end{equation*}
           \begin{equation}\label{95}
          \leq                   \left\{
                                      \begin{array}{ll}
                                        L_{i}|X_{i}(t)+\upsilon-y|, & \hbox{if $t\in[t_{0},T_{\max}]$ and $x>X_{i}(t)+\upsilon,$}\\
                                         \text{\,}\\
                                       L_{i}|X_{i}(t)-\upsilon-y|, & \hbox{for $t_{0}\leq t\leq T_{\max}$ and $x<X_{i}(t)-\upsilon.$}
                                      \end{array}
                                    \right.
           \end{equation}
           But
            \begin{equation*}
        (t,y)\in D_{i}\Leftrightarrow|X_{i}(t)-y|\leq\upsilon\Leftrightarrow X_{i}(t)-y\leq\upsilon\text{\,\,\,\,and\,\,\,\,}-X_{i}(t))+y\leq\upsilon.
           \end{equation*}
            Using this, it is easy to see that
             \begin{equation*}
          x>X_{i}(t)+\upsilon\Leftrightarrow x-y> X_{i}(t)+\upsilon-y\geq0\Leftrightarrow |X_{i}(t)-y|> |X_{i}(t)+\upsilon-y|,
           \end{equation*}
          and
         \begin{equation*}
          x<X_{i}(t)-\upsilon\Leftrightarrow x-y<X_{i}(t)-\upsilon-y\leq0 \Leftrightarrow |X_{i}(t)-y|> |X_{i}(t)-\upsilon-y|.
           \end{equation*}
          This fact together with estimate $(\ref{95})$ results in
           \begin{equation*}
           |\widehat{\delta}_{i}(t,x)-\widehat{\delta}_{i}(t,y)|\leq L_{i}|x-y|.
           \end{equation*}
           This ends the proof of the first case (a). Thus, $\widehat{\delta}_{i}$ satisfies the "Lipschitz condition" on the strip
        $\mathcal{S}_{0}=\{(t,x):\text{\,\,}t_{0}\leq t\leq T_{max},\text{\,\,}x\in\mathbb{R}\}.$ In addition, $\widehat{\delta}_{i}$ is also
         continuous on $\mathcal{S}_{0}.$\\

         Since, $(t,X_{i}(t))\in D_{i}$, so $\widehat{\delta}_{i}(t,X_{i}(t))=\delta_{i}(t,X_{i}(t))$. A combination of $(\ref{92})$-$(\ref{93})$
          provides
         \begin{equation}\label{96a}
           |\Delta_{i}(t_{n},X_{i}(t_{n}))-\widehat{\delta}_{i}(t_{n},X_{i}(t_{n}))|=|\Delta_{i}(t_{n},X_{i}(t_{n}))-\delta_{i}(t_{n},X_{i}(t_{n}))|
            \leq C_{4}h^{2},
          \end{equation}
          and
         \begin{equation}\label{96b}
           |\Delta_{i}(t_{n+\frac{1}{2}},X_{i}(t_{n+\frac{1}{2}}))-\widehat{\delta}_{i}(t_{n+\frac{1}{2}},X_{i}(t_{n+\frac{1}{2}}))|=
          |\Delta_{i}(t_{n+\frac{1}{2}},X_{i}(t_{n+\frac{1}{2}}))-\delta_{i}(t_{n+\frac{1}{2}},X_{i}(t_{n+\frac{1}{2}}))|\leq C_{5}h^{2}.
          \end{equation}
           We recall that the two-level explicit method $(\ref{e1})$-$(\ref{e3})$ is generated by the function $\delta_{i}.$ In fact, plugging equations
           $(\ref{e1})$, $(\ref{e2})$ and $(\ref{82})$, $\delta_{i}$ is explicitly defined as
            \begin{equation*}
            \delta_{i}(t_{n},X_{i}(t_{n}))=\left\{
                                             \begin{array}{ll}
                                               F_{i}(t_{n},X^{n})+\frac{h}{4}\left(\partial_{t}F_{i}(t_{n},X^{n})+\underset{k=1}
                                     {\overset{9}\sum}F_{k}(t_{n},X^{n})\partial_{k}F_{i}(t_{n},X^{n})\right), & \hbox{if $h\neq0$} \\
                                               F_{i}(t_{n},X^{n}), & \hbox{if $h=0,$}
                                             \end{array}
                                           \right.
            \end{equation*}
            and
              \begin{equation*}
            \delta_{i}(t_{n+\frac{1}{2}},X_{i}(t_{n+\frac{1}{2}}))=3F_{i}(t_{n+\frac{1}{2}},X^{n+\frac{1}{2}})-F_{i}(t_{n},X^{n}).
            \end{equation*}
            Thus, approximations $(\ref{e1})$ and $(\ref{e2})$ become
            \begin{equation*}
            X_{i}^{n+\frac{1}{2}}=X_{i}^{n}+\frac{h}{2}\delta_{i}(t_{n},X_{i}(t_{n}))\text{\,\,\,\,and\,\,\,\,}
            X_{i}^{n+1}=X_{i}^{n+\frac{1}{2}}+\frac{h}{2}\delta_{i}(t_{n+\frac{1}{2}},X_{i}(t_{n+\frac{1}{2}})).
            \end{equation*}
             Analogously, the two-level numerical scheme generated by $\widehat{\delta}_{i}$ should furnish the approximate solutions which satisfy
             \begin{equation}\label{97a}
            \widehat{X}_{i}^{n+\frac{1}{2}}=\widehat{X}_{i}^{n}+\frac{h}{2}\widehat{\delta}_{i}(t_{n},\widehat{X}_{i}(t_{n}))
            \end{equation}
             and
             \begin{equation}\label{97b}
            \widehat{X}_{i}^{n+1}=\widehat{X}_{i}^{n+\frac{1}{2}}+\frac{h}{2}\widehat{\delta}_{i}(t_{n+\frac{1}{2}},\widehat{X}_{i}(t_{n+\frac{1}{2}})).
            \end{equation}
            In view of equation $(\ref{81})$, we have
            \begin{equation}\label{98}
            X_{i}(t_{l+\frac{1}{2}})=X_{i}(t_{l})+\frac{h}{2}\Delta_{i}(t_{n},X_{i}(t_{l})),\text{\,\,\,\,for\,\,\,\,}l=n,n+\frac{1}{2}.
            \end{equation}
            Plugging relations $(\ref{97a})$-$(\ref{98})$, and because $\widehat{e}_{i}^{l}=X_{i}(t_{l})-\widehat{X}_{i}^{l},$ direct
            calculations result in
            \begin{equation*}
            \widehat{e}_{i}^{n+\frac{1}{2}}=\widehat{e}_{i}^{n}+\frac{h}{2}\left(\Delta_{i}(t_{n},X_{i}(t_{n}))-
             \widehat{\delta}_{i}(t_{n},\widehat{X}_{i}(t_{n}))\right)=\widehat{e}_{i}^{n}+\frac{h}{2}\left[(\widehat{\delta}_{i}(t_{n},X_{i}(t_{n}))
             -\widehat{\delta}_{i}(t_{n},\widehat{X}_{i}(t_{n})))\right.
            \end{equation*}
             \begin{equation*}
            \left.+(\Delta_{i}(t_{n},X_{i}(t_{n}))-\widehat{\delta}_{i}(t_{n},X_{i}(t_{n})))\right].
            \end{equation*}
            Taking the absolute value, it is easy to see that
             \begin{equation*}
            \left|\widehat{e}_{i}^{n+\frac{1}{2}}\right|\leq |\widehat{e}_{i}^{n}|+\frac{h}{2}\left[|\widehat{\delta}_{i}(t_{n},X_{i}(t_{n}))
             -\widehat{\delta}_{i}(t_{n},\widehat{X}_{i}(t_{n}))|+|\Delta_{i}(t_{n},X_{i}(t_{n}))-\widehat{\delta}_{i}(t_{n},X_{i}(t_{n}))|\right]\leq
            \end{equation*}
            \begin{equation}\label{99}
            |\widehat{e}_{i}^{n}|+\frac{h}{2}\left[L_{i}|X_{i}(t_{n})-\widehat{X}_{i}^{n}|+C_{4}h^{2}\right]\leq\left(1+\frac{Lh}{2}\right)
             |\widehat{e}_{i}^{n}|+\frac{C_{4}}{2}h^{3},
            \end{equation}
             where $L=\underset{1\leq i\leq9}{\max}L_{i}$. The last two estimates follows from inequalities $(\ref{94a})$ and $(\ref{96a}).$ Taking the
             maximum over $i$, estimate $(\ref{99})$ gives
             \begin{equation}\label{100a}
            \|\widehat{e}^{n+\frac{1}{2}}\|_{\infty}\leq\underset{1\leq i\leq9}{\max}\left\{\left(1+\frac{Lh}{2}\right)|
             \widehat{e}_{i}^{n}|
             +\frac{C_{4}}{2}h^{3}\right\}\leq \left(1+\frac{Lh}{2}\right)\|\widehat{e}^{n}\|_{\infty}+\frac{C_{4}}{2}h^{3}.
            \end{equation}
             In a similar way, utilizing relations $(\ref{97b})$, $(\ref{98})$, $(\ref{94a})$ and $(\ref{96b})$, one easily shows that
             \begin{equation}\label{100b}
            \|\widehat{e}^{n+1}\|_{\infty}\leq \left(1+\frac{Lh}{2}\right)\|\widehat{e}^{n+\frac{1}{2}}\|_{\infty}+\frac{C_{5}}{2}h^{3}.
            \end{equation}
            Substituting estimate $(\ref{100a})$ into $(\ref{100b})$ to obtain
            \begin{equation}\label{101}
            \|\widehat{e}^{n+1}\|_{\infty}\leq \left(1+\frac{Lh}{2}\right)^{2}\|\widehat{e}^{n}\|_{\infty}+
            \left[\left(1+\frac{Lh}{2}\right)\frac{C_{4}}{2}+\frac{C_{5}}{2}\right]h^{3}=(1+\alpha_{1h})\|\widehat{e}^{n}\|_{\infty}+
            \alpha_{2h}h^{3},
            \end{equation}
             where $\alpha_{1h}=L\left(1+\frac{Lh}{4}\right)h$ and $\alpha_{2h}=\left(1+\frac{Lh}{2}\right)\frac{C_{4}}{2}+\frac{C_{5}}{2}.$ To guarantee
            the convergence of the algorithm, the step size should satisfy $0<h\leq1.$ This restriction allows to write: $\alpha_{2h}\leq\left(1+
            \frac{L}{2}\right)\frac{C_{4}}{2}+\frac{C_{5}}{2}:=\alpha_{2}.$ This fact, together with inequality $(\ref{101})$ yield
             \begin{equation*}
            \|\widehat{e}^{n+1}\|_{\infty}\leq \left(1+\frac{Lh}{2}\right)^{2}\|\widehat{e}^{n}\|_{\infty}+\alpha_{2}h^{3}.
            \end{equation*}
            Applying Lemma $\ref{l1}$, it holds
            \begin{equation}\label{102}
            \|\widehat{e}^{n}\|_{\infty}\leq \exp(n\alpha_{1h})\|\widehat{e}^{0}\|_{\infty}+\frac{\exp(n\alpha_{1h})-1}{\alpha_{1h}}
             \alpha_{2}h^{3}.
            \end{equation}
             But it comes from the initial condition that $\widehat{e}^{0}=0.$ Using this, relation $(\ref{102})$ becomes
             \begin{equation*}
            \|\widehat{e}^{n}\|_{\infty}\leq \frac{\exp[nLh(1+\frac{Lh}{4})]-1}{L(1+\frac{L}{4})}\alpha_{2}h^{2}.
            \end{equation*}
            Since $h=\frac{T_{max}-t_{0}}{M}$ and $t_{n}=t_{0}+nh,$ then $nh=t_{n}-t_{0}\leq T_{max}$. Furthermore, $1<1+\frac{Lh}{4}\leq1+\frac{L}{4}$
            (indeed, $0<h\leq1$)). Utilizing this fact, we have
             \begin{equation}\label{103}
            \|\widehat{e}^{n}\|_{\infty}\leq \frac{\exp[LT_{max}(1+\frac{L}{4})]-1}{L}\alpha_{2}h^{2},
            \end{equation}
           which can be rewritten as
             \begin{equation}\label{104}
            |X_{i}(t_{n})-\widehat{X}_{i}^{n}|\leq \frac{\alpha}{L}\left[\exp[LT_{max}(1+\frac{L}{4})]-1\right]h^{2}<
                \frac{\alpha}{L}\left[\exp[LT_{max}(1+\frac{L}{4})]-1\right],\text{\,\,for\,\,}i=1,2,...,9,
            \end{equation}
            since $0<h\leq1.$ Setting $\upsilon=\frac{\alpha}{L}\left[\exp[LT_{max}(1+\frac{L}{4})]-1\right]>0,$ estimate $(\ref{104})$ indicates that
            $(t_{n},\widehat{X}_{i}^{n})\in D_{i}$, for $i=1,2,...,9.$\\

           Now, according to the definition of $\widehat{\delta}_{i},$ we should have $X_{i}^{n}=\widehat{X}_{i}^{n},$ $e_{i}^{n}=\widehat{e}_{i}^{n}$
           and $\delta_{i}(t_{n},X_{i}(t_{n}))=\widehat{\delta}_{i}(t_{n},\widehat{X}_{i}(t_{n})).$ This fact and estimates $(\ref{103})$-$(\ref{104})$
           provide
            \begin{equation}\label{105}
            \|X(t_{n})-X^{n}\|_{\infty}\leq \upsilon\text{\,\,\,and\,\,\,}\|e^{n}\|_{\infty}\leq \upsilon h^{2}.
            \end{equation}
            It comes from the inequality $\|u\|_{\infty}-\|v\|_{\infty} \leq \|u-v\|_{\infty}$ (for any $u,v\in\mathbb{R}^{9}$) and estimate
           $(\ref{105})$ that
             \begin{equation}\label{107}
            \|X^{n}\|_{\infty}\leq \|X(t_{n})\|_{\infty}+\upsilon.
            \end{equation}
           The analytical solution $X(\cdot)$ is bounded on the interval $[t_{0},T_{max}]$ because $(t,X(t))\in D,$ where the domain $D$ is given
           by relation
            $(\ref{D1}).$ It follows from relation $(\ref{107})$ that the approximate solution $X$ is also bounded over the interval $[t_{0},T_{max}]$.
          Hence, the proposed approach $(\ref{e1})$-$(\ref{e3})$ for solving the initial-value problem $(\ref{1})$-$(\ref{6})$ is stable.\\

          Finally, the second estimate in relation $(\ref{105})$ suggests that the new method is at least second-order convergent.
          This completes the proof of Theorem $\ref{t1}.$
         \end{proof}

         \section{Numerical experiments and Convergence rate}\label{sec5}
         In this section, we use MatLab $R2007b$ and we present a broad range of numerical evidences to illustrate and demonstrate the efficiency of the
          proposed approach applied to the mathematical model of the Covid-19 spreading. We stress that in this situation, we obtain satisfactory results,
          so our algorithm performances are not worse for multidimensional problems. We consider the particular case of the SARS-Cov-2 in Cameroon where
         the data are available in \cite{camer} and we analyze and discuss the obtained results. The other real data used in this study are taken from the
          literature \cite{20ifvr,21ifvr,camer}. More precisely, the number of people in this country is approximately $N=25.000.000$, the fraction
        at time $t,$ of people in compartment $X_{5}$ that are hospitalized is assumed equals $1$ ($p(t)=1$, because of the decision made by WHO it was
         decided to hospitalized all detected cases to reduce the transmission of the SARS-Cov-2 virus \cite{45ifvr}), $d_{X_{2}}=5.5$ days,
         $d_{g}=6$ days, $d_{X_{4}}=7.3$ days (that is, $\gamma_{X_{4}}=1/7.3$ $(day^{-1})),$ $C_{u}=0.3$ days, $T_{max}=55$ days, $t_{0}=0$
          (which corresponds to $06$ March $2020$), $h=1$ (the step size), $\underline{w}=0.00133$, $\underline{w}=0.0667,$ $\beta_{X_{2}}=0.1125$,
         $\beta_{X_{3}}=0.375$, $d_{0}=14$ (the period of convalescence, we recall that this corresponds to the time a person is still hospitalized
         after recovering from Covid-19), $\gamma_{X_{2}}=0.1818$, $\gamma_{X_{3}}=0.78895$, $C_{X_{2}}=0.3$, $k=0.13$ (efficiency of control measures),
         $\lambda=12$ (first day of the application of the control measures which corresponds to $17$ March $2020$ in Cameroon),
         $\underline{\beta}_{X_{3}}=C_{u}\times \beta_{X_{3}}=0.1125$ and $\overline{\beta}_{X_{3}}=\beta_{X_{3}}=0.375$.\\

         To prevent the transmission of the Coronavirus $2019$ in the other cities of the country, the authorities decided to suspend all fights in
          the country and to restrict the number of passengers in all public transportation and outbound trains in the cities of Yaounde, Douala,
          Bafoussam and municipalities on $17$ March $2020$ \cite{camer}. Hence, the following implementation of the control measures is considered
          in order to indicate the real situation of the measured imposed
             \begin{equation*}
           m(t)=\left\{
                  \begin{array}{ll}
                    1, & \hbox{if $t\in[05\text{\,\,March\,\,}2020,\text{\,\,}\lambda)$;} \\
                    \exp[-k(t-\lambda)], & \hbox{if $t\in[\lambda,\text{\,\,}T_{max}$),}
                  \end{array}
                \right.
             \end{equation*}
          where $\lambda$ corresponds to $17$ March $2020$. Furthermore, we assume that the fraction of detected infected people ($\theta(t)$)
         is a linear function and the disease contact rate of a person in compartment $X_{4}$ denoted $w(t)$ (without taking into account the
           control measures) in this territory is a continuous function and are given by
         \begin{equation*}
          \theta(t)=\overline{w}+\frac{1-\overline{w}}{T_{max}}t\text{\,\,\,and\,\,\,}
         \beta_{X_{4}}(\theta(t))=\underline{\beta}_{X_{3}}+(\beta_{X_{3}}-\underline{\beta}_{X_{3}})\frac{1-\theta(t)}{1-w(t)}.
         \end{equation*}
          We recall that the initial data used in our experiments are taken in \cite{camer}. Specifically, as on $06$ March $2020$, there were two
         cases in which the first one was imported on $24$ February $2020$ and the other case was occurring by contact with the first case
         reported. By devoting resources, on $08$ March $2020,$ $108$ of the $176$ identified individuals of local transmission were traced back
         to their presumed exposure, either to a known case or to a location linked to spread \cite{camer}. From this observation, we set
         $X_{1}(t_{0})=N-176,$ $X_{2}(t_{0})=172,$ $X_{3}(t_{0})=X_{4}(t_{0})=1,$ $X_{5}(t_{0})=2,$ $X_{j}(t_{0})=0,$ for $j=6,7,8,9.$ Using the
         above tools together with the data provided in \cite{camer}, we perform a wide set of numerical experiments to demonstrate the robustness
         of the new approach. More specifically:\\

         In Table 1 we show the evolution of the model cumulative number of cases in Cameroon from $06$ March $2020$ to $30$ April
          $2020$ subjects to various values of the fraction of detected infected people that are documented ($\theta$), whereas Figure $\ref{fig1}$
         (\textbf{Figure 1}) deals with the evolution of the predicted number of cases in the same country during the considered time interval
         (here, $T_{max}=55$ days), taking into account different values of $\theta$. The model cumulative number of reported deaths and computed
         ones are presented in both Table 2 and Figure $\ref{fig1}$ (\textbf{Figure 2}). Also in this case, we consider different values of $\theta$.
         Table 3 and Figure $\ref{fig1}$ (\textbf{Figure 3}) indicate the model cumulative number of hospitalized people using various values of $\theta$
         while both Table 4 and Figure $\ref{fig1}$ (\textbf{Figure 4}) suggest the number of infected individuals, who are not expected to be
         detected yet, may infect other persons and start to developing clinical signs. In both Table 5 and Figure $\ref{fig2}$ (\textbf{Figure 5}),
         we present the expected number of people that will recover, but can still infect other persons. In Table 6 and Figure $\ref{fig2}$
         (\textbf{Figure 6}), we show the evolution of the cumulative number of people who recovered after being previously infected but were
         undetected and documented by the government and are no longer infectious. The model cumulative number of individuals exposed to the Covid-19
          disease is indicated in both Table 7 and Figure $\ref{fig2}$ (\textbf{Figure 7}). Both Table 8 and Figure $\ref{fig2}$ (\textbf{Figure 8})
         present the evolution of the number of persons infected by contact with people in compartment $X_{2}$, while Table 9 together with
         Figure $\ref{fig3}$ (\textbf{Figure 9}) show the model cumulative number of people infected by contact with individuals in compartment $X_{4}.$
         In Table 10 and Figure $\ref{fig3}$ (\textbf{Figure 10}), we present the number of persons infected by contact with people in compartment
         $X_{10}=X_{5}+X_{6}.$ Each analysis described above considers various values of $\theta.$ Finally, we draw in Table 11 and sketch in Figure
         $\ref{fig3}$ (\textbf{Figure 11}) the effective reproduction number of SARS-Cov-2 for Cameroon.\\

         Now, focusing on the values taken by $\theta,$ we observe from the tables and figures that for a greater value of $\theta$ (considered in
          this analysis) the predicted results reproduce quite accurately the evolution of the number of cases, number of deaths and number of
         hospitalized people. In particular, for $\theta=0.3212$ the predicted values on $30$ April $2020$ of cases, deaths and people in
         hospital are approximately $2236.0$, $111.2786$ and $47.4915$, respectively, while these values become overestimated: $3260.6$ cases,
         $166.1457$ deaths and $90.1832$ hospitalized people when $\theta=\overline{w}=0.0667$ (the smallest value taken by $\theta$ in
          $[\overline{w},1]$). Furthermore, for various values of the parameter $\theta$ we observe that the maximum number of undetected cases: $116.0502$ people
         (for $\theta=0.0667$), $85.1727$ people (for $\theta=0.1515$), $74.3269$ people (for $\theta=0.3212$) and $56.5865$ people
         (for $\theta=0.7455$) estimated by the new approach represent: $5.5\%$ (for $\theta=0.0667$), $4.2257\%$ (for $\theta=0.1515$), $4.0439\%$
         (for $\theta=0.3212$) and $3.6912\%$ (for $\theta=0.7455$) of the number of total cases obtained on $30$ March $2020$. Interestingly,
          the results provided by our method suggest that, despite the relative control of Covid-19 pandemic in Cameroon,
         they may still exist an undetected source of infected persons that could cause the increase of the disease in a near future, if the
         implementation of the control measures are significantly relaxed like the government decided at the beginning of May $2020$. In addition,
         Figure $\ref{fig1}$ (\textbf{Figure 3}) indicates that the peak of the persons hospitalized in this country at the same time should be
         reached around
         $05$ April $2020$, with approximately $800$ hospitalized patients. This number is associated to the smallest value of $\theta=0.0667$ considered
         in this study. However, the obtained results slightly overestimate the observed values by around $693.$ Focusing on the recovered people, the
         considered technique suggests a maximum number of $3037.7$ people, $2300.8$ people, $2073.5$ people, and $1686.0$ people corresponding to
         various values of $\theta:$ $0.0667$, $0.1515$, $0.3212$ and $0.7455$, respectively. For $\theta=0.1515,$ the final number of $241.0999$
         people is very close to the real observation (around $244$ hospitalized individuals on $15$ April $2020$). This shows that the proposed
         method is a reasonable decision tool to estimate the number of beds in hospital during a pandemic or an epidemic. Also, it is worth mentioning
         that our approach is able to detect early a reasonable expected date of this peak.\\

         Finally, we observe from both Table 11 and Figure $\ref{fig3}$ (\textbf{Figure 11}) that the value of the effective reproduction number
        ($R_{e}$) decreases since the application of the control measures and it becomes less than $1$ after the peak is attained ($30$ March $2020$).

         \textbf{Tables 1.} Model cumulative number of cases with various values of $\theta$
           $$\begin{tabular}{|c|c|c|c|c|c|}
            \hline
               & 06 March 2020 & 17 March 2020 & 30 March 2020 & 15 April 2020 & 30 April 2020 \\
            \hline
            NC(real data)& 2 & 15       & 142    & 848  & 2014  \\
            \hline
            CM($0.0667$) & 2 & 696.3508 & 2149.8 & 3245.3 & 3322.0\\
            \hline
            CM($0.1515$) & 2 & 681.7603 & 2048.7 & 2467.4 & 2481.5 \\
            \hline
            CM($0.3212$) & 2 & 647.5659 & 1861.4 & 2223.8 & 2236.0\\
            \hline
            CM($0.7455$) & 2 & 583.5342 & 1538.2 & 1808.4 & 1817.5\\
            \hline
          \end{tabular}$$
           \text{\,}\\
         \textbf{Tables 2.} The evolution of number of deaths with different values of $\theta$
           $$\begin{tabular}{|c|c|c|c|c|c|}
            \hline
               & 06 March 2020 & 17 March 2020 & 30 March 2020 & 15 April 2020 & 30 April 2020 \\
            \hline
            ND(real data)   & 0 & 1       & 6       & 14       &  61\\
            \hline
            $X_{9}(0.0667$) & 0 & 17.8300 & 79.6390 & 141.2279 & 166.1457 \\
            \hline
            $X_{9}(0.1515$) & 0 & 17.5511 & 73.8069 & 108.6332 & 121.8038\\
            \hline
            $X_{9}(0.3212$) & 0 & 16.8994 & 68.3353 & 99.5074  & 111.2786 \\
            \hline
            $X_{9}(0.7455$) & 0 & 15.6638 & 58.6885 & 83.6826  & 93.0934 \\
            \hline
          \end{tabular}$$

         \textbf{Tables 3.} The cumulative number of hospitalized people
           $$\begin{tabular}{|c|c|c|c|c|c|}
            \hline
               & 06 March 2020 & 17 March 2020 & 30 March 2020 & 15 April 2020 & 30 April 2020 \\
            \hline
            NH (real obs.) & 2 & 8        & 61       & 244      & 328\\
            \hline
            Host($0.0667$) & 2 & 335.5282 & 668.8886 & 476.0395 & 86.9377 \\
            \hline
            Host($0.1515$) & 2 & 325.1239 & 673.6724 & 241.0999 & 44.0635\\
            \hline
            Host($0.3212$) & 2 & 300.8178 & 583.1121 & 209.3954 & 47.4915 \\
            \hline
            Host($0.7455$) & 2 & 255.0717 & 432.0405 & 158.2955 & 54.7413 \\
            \hline
          \end{tabular}$$
            These numbers should help to expect the number of beds in hospitals during an epidemic.
            \text{\,}\\
           \text{\,}\\
         \textbf{Tables 4.} Cumulative number of infected individuals who are not expected to be detected yet, but start developing clinical signs
           $$\begin{tabular}{|c|c|c|c|c|c|}
            \hline
               & 06 March 2020 & 17 March 2020 & 30 March 2020 & 15 April 2020 & 30 April 2020 \\
            \hline
            $X_{3}(0.0667$) & 1 & 70.3135 & 116.0502 & 15.4822 & 0.2363 \\
            \hline
            $X_{3}(0.1515$) & 1 & 67.8561 & 85.1727 & 2.6167 & 0.0608\\
            \hline
            $X_{3}(0.3212$) & 1 & 62.0955 & 74.3269 & 2.2831 & 0.0530 \\
            \hline
            $X_{3}(0.7455$) & 1 & 51.8569 & 56.5865 & 1.7375 & 0.0404 \\
            \hline
          \end{tabular}$$
           \text{\,}\\
           \text{\,}\\
         \textbf{Tables 5.} Model cumulative number of infected people that will recover but can still infect other persons
           $$\begin{tabular}{|c|c|c|c|c|c|}
            \hline
               & 06 March 2020 & 17 March 2020 & 30 March 2020 & 15 April 2020 & 30 April 2020 \\
            \hline
            $X_{5}(0.0667$) & 2 & 307.1123 & 608.9473 & 486.4522 & 106.7015 \\
            \hline
            $X_{5}(0.1515$) & 2 & 297.3910 & 626.9163 & 268.6669 & 52.8153\\
            \hline
            $X_{5}(0.3212$) & 2 & 274.6855 & 547.2620 & 232.7084 & 45.7222 \\
            \hline
            $X_{5}(0.7455$) & 2 & 231.8949 & 413.7484 & 173.1906 & 33.9913 \\
            \hline
          \end{tabular}$$
           \text{\,}\\
         \textbf{Tables 6.} Evolution of the number of undetected infected people who recovered but are no longer infectious
           $$\begin{tabular}{|c|c|c|c|c|c|}
            \hline
               & 06 March 2020 & 17 March 2020 & 30 March 2020 & 15 April 2020 & 30 April 2020 \\
            \hline
            $X_{5}(0.0667$) & 0 & 342.9927 & 1402.0 & 2582.4 & 3037.7 \\
            \hline
            $X_{5}(0.1515$) & 0 & 339.0853 & 1299.3 & 2070.9 & 2300.8\\
            \hline
            $X_{5}(0.3212$) & 0 & 329.8487 & 1202.1 & 1874.4 & 2073.5 \\
            \hline
            $X_{5}(0.7455$) & 0 & 312.7987 & 1030.5 & 1537.8 & 1686.0 \\
            \hline
          \end{tabular}$$
           \text{\,}\\
           \textbf{Tables 7.} Cumulative number of people exposed to Covid-19 disease
           $$\begin{tabular}{|c|c|c|c|c|c|}
            \hline
               & 06 March 2020 & 17 March 2020 & 30 March 2020 & 15 April 2020 & 30 April 2020 \\
            \hline
            $X_{2}(0.0667$) & 172 & 317.5650 & 495.4672 & 52.1206 & 0.7125 \\
            \hline
            $X_{2}(0.1515$) & 172 & 306.0850 & 289.9048 & 8.6553 & 0.2011\\
            \hline
            $X_{2}(0.3212$) & 172 & 279.0561 & 252.9417 & 7.5517 & 0.1754 \\
            \hline
            $X_{2}(0.7455$) & 172 & 231.3889 & 192.4997 & 5.7472 & 0.1335 \\
            \hline
          \end{tabular}$$
           \text{\,}\\
           \textbf{Tables 8.} Model cumulative number of people infected by contact with persons in compartment $X_{2}$
           $$\begin{tabular}{|c|c|c|c|c|c|}
            \hline
               & 06 March 2020 & 17 March 2020 & 30 March 2020 & 15 April 2020 & 30 April 2020 \\
            \hline
            $X_{2}(0.0667$) & 19.3499 & 35.7245 & 37.7351 & 0.6432 & 0.0013 \\
            \hline
            $X_{2}(0.1515$) & 19.3499 & 34.4331 & 0.0000 & 0.0000 & 0.0000\\
            \hline
            $X_{2}(0.3212$) & 19.3499 & 31.3926 & 0.0000 & 0.0000 & 0.0000 \\
            \hline
            $X_{2}(0.7455$) & 19.3499 & 26.0303 & 0.0000 & 0.0000 & 0.0000 \\
            \hline
          \end{tabular}$$
           \text{\,}\\
           \textbf{Tables 9.} Model cumulative number of people infected by contact with persons in compartment $X_{4}$
           $$\begin{tabular}{|c|c|c|c|c|c|}
            \hline
               & 06 March 2020 & 17 March 2020 & 30 March 2020 & 15 April 2020 & 30 April 2020 \\
            \hline
            $X_{4}(0.0667$) & 0.3750 & -9.1174 & -12.2623 & -1.5011 & -0.0477 \\
            \hline
            $X_{4}(0.1515$) & 0.3641 & -8.0160 & 0.0000      & 0.0000     & 0.0000\\
            \hline
            $X_{4}(0.3212$) & 0.3368 & -5.6330 & 0.0000      & 0.0000     & 0.0000 \\
            \hline
            $X_{4}(0.7455$) & 0.2686 & -1.5867 & 0.0000      & 0.0000     & 0.0000 \\
            \hline
          \end{tabular}$$
           \text{\,}\\
           \textbf{Tables 10.} Model cumulative number of people infected by contact with persons in compartment $X_{10}=X_{5}+X_{6}$
           $$\begin{tabular}{|c|c|c|c|c|c|}
            \hline
               & 06 March 2020 & 17 March 2020 & 30 March 2020 & 15 April 2020 & 30 April 2020 \\
            \hline
            $X_{10}(0.0667$) & 0.0044 & 0.7379 & 0.9748   & 0.1191 & 0.0038 \\
            \hline
            $X_{10}(0.1515$) & 0.0058 & 0.9354 & 000      & 000     & 000\\
            \hline
            $X_{10}(0.3212$) & 0.0069 & 1.0414 & 000      & 000     & 000 \\
            \hline
            $X_{10}(0.7455$) & 0.0076 & 0.9745 & 000      & 000     & 000 \\
            \hline
          \end{tabular}$$
           \text{\,}\\
           \textbf{Tables 11.} Effective reproduction number of SARS-Cov-2 for Cameroon
           $$\begin{tabular}{|c|c|c|c|c|c|}
            \hline
               & 06 March 2020 & 17 March 2020 & 30 March 2020 & 15 April 2020 & 30 April 2020 \\
            \hline
            Re$(0.0667$) & 1.1111 & 1.1111 & 0.7546   & 0.1229 & 0.0175  \\
            \hline
            Re$(0.1515$) & 1.1164 & 1.1163 & 0.0000    & 0.0000   & 0.0000\\
            \hline
            Re$(0.3212$) & 1.1209 & 1.1208 & 0.0000   & 0.0000  & 0.0000 \\
            \hline
            Re$(0.7455$) & 1.1236 & 1.1236 & 0.0000   &0.0000   & 0.0000 \\
            \hline
          \end{tabular}$$

         \section{General conclusion and future works}\label{sec6}
         We have developed an efficient two-level explicit method for estimating the propagation of Covid-19 disease with undetected infectious cases. The
         analysis has shown that the new algorithm is stable, at least second-order accuracy and can serve as a fast and robust tool for integrating of
         general systems of ordinary differential equations. Numerical results based on the case of Cameroon reproduce quite accurately the evolution of
         the number of cases (detected or undetected), number of deaths, number of people in hospitals, number of infected detected persons who recovered
         and number of infected undetected individuals who recovered by natural immunity from $06$ March $2020$ to $30$ April $2020.$ The approach
         presented in this work can help to estimate the number of beds in hospitals during a pandemic. Furthermore, the proposed technique can be
         considered as a fundamental tool for detecting early a reasonable expected date of the peak during an epidemic. Our future works will consider
         the numerical solution of a more complex system of ordinary differential equations using the new two-level explicit approach.\\

         \textbf{Acknowledgment.} This work has been partially supported by the deanship of scientific research of Al-Imam Muhammad Ibn Saud Islamic
       University (IMSIU) under the Grant No. 331203.

            \newpage
         \begin{figure}
         \begin{center}
          Model cumulative number of cases, deaths, hospitalized people and infected persons who are not detected yet.
          \begin{tabular}{c c}
         \psfig{file=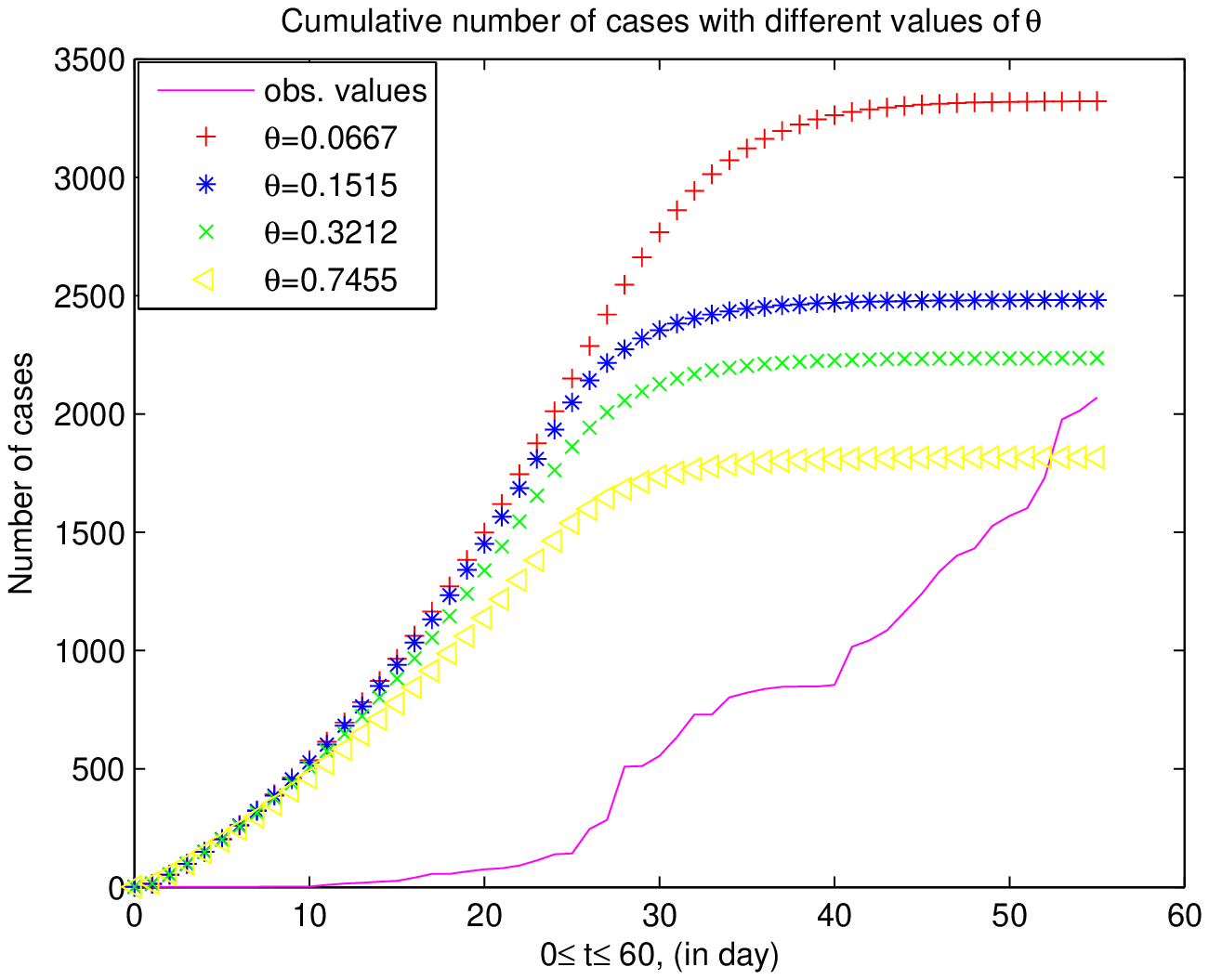,width=8cm}  & \psfig{file=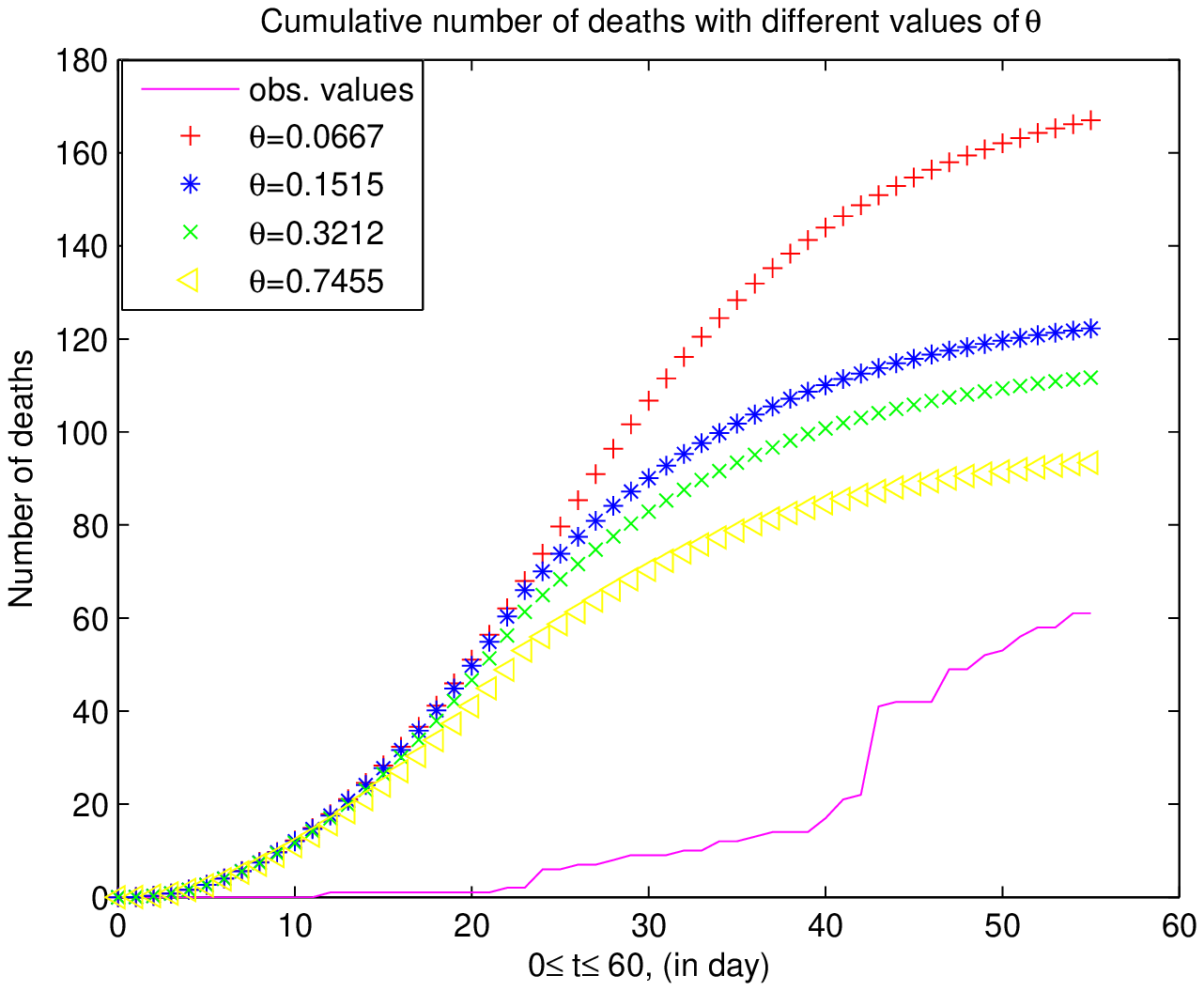,width=8cm}\\
                    \textbf{Figure 1}  & \textbf{Figure 2}\\
          \psfig{file=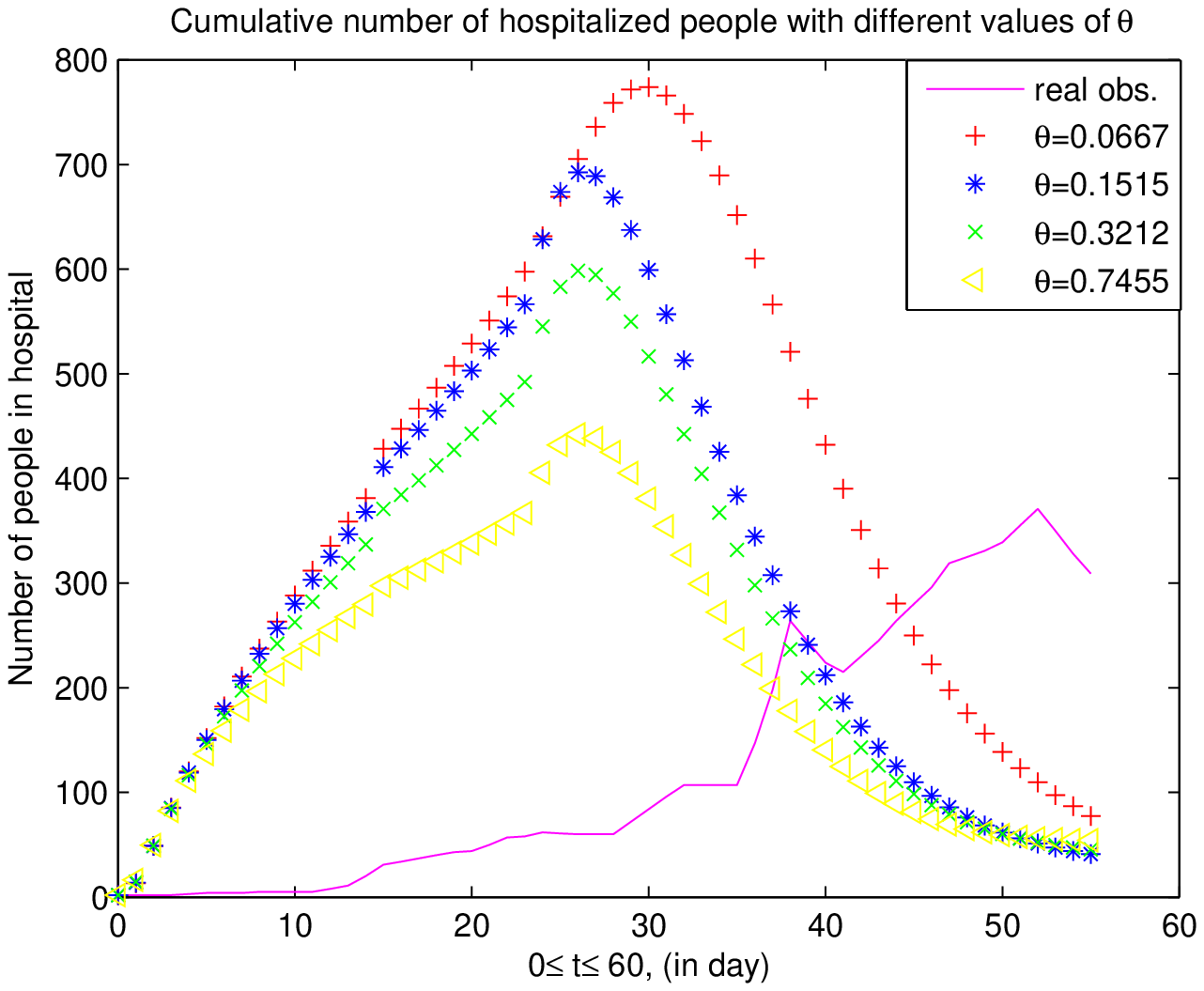,width=8cm}  & \psfig{file=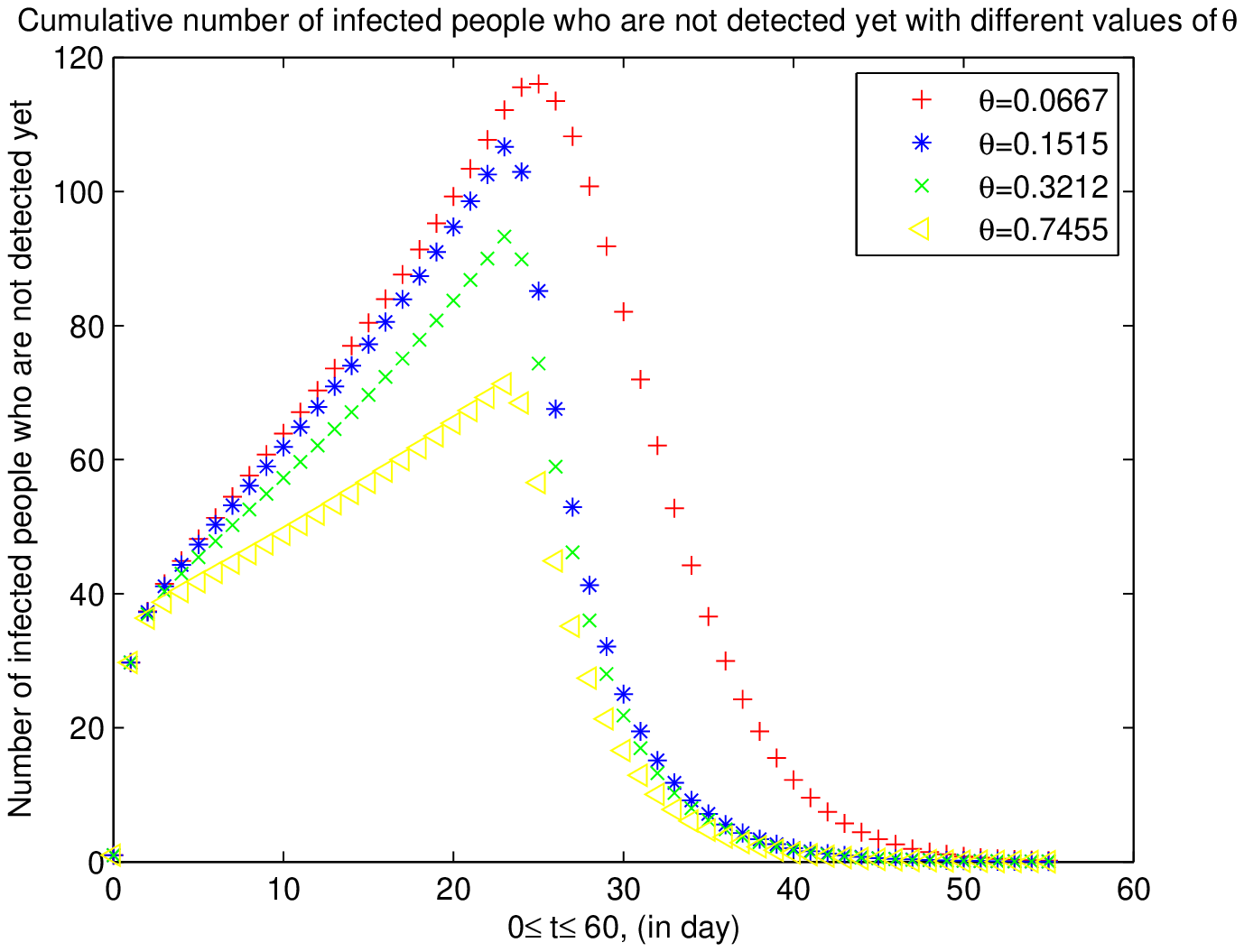,width=8cm}\\
           \textbf{Figure 3}  & \textbf{Figure 4}\\
         \end{tabular}
        \end{center}
         \caption{Number of cases, deaths, hospitalized people and undetected infected individual}
          \label{fig1}
          \end{figure}
         \begin{figure}
         \begin{center}
         Cumulative number of recovered infected, undetected infected who recovered, people exposed to Covid-19, infected by contact with $X_{2}$.
          \begin{tabular}{c c}
         \psfig{file=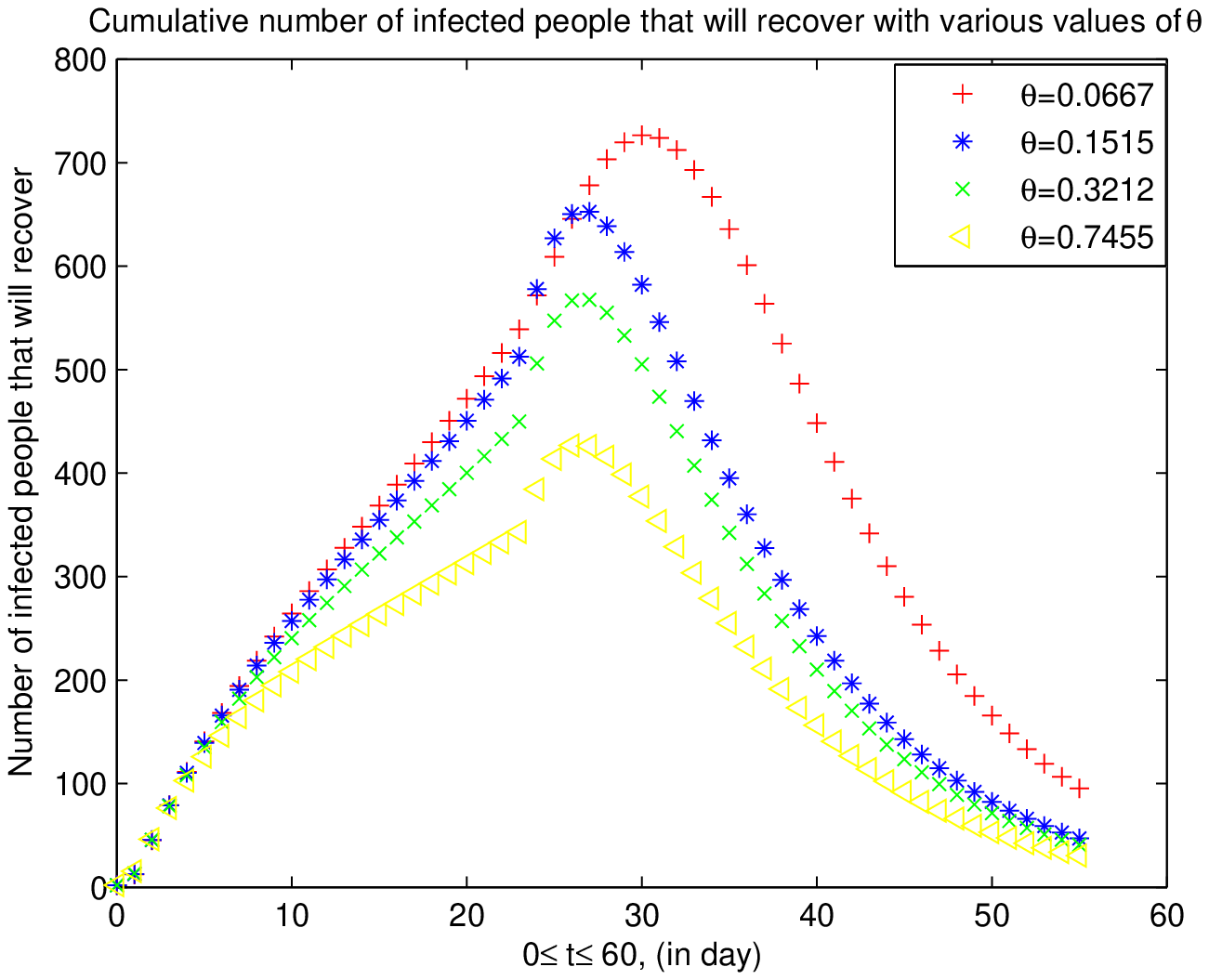,width=8cm}  & \psfig{file=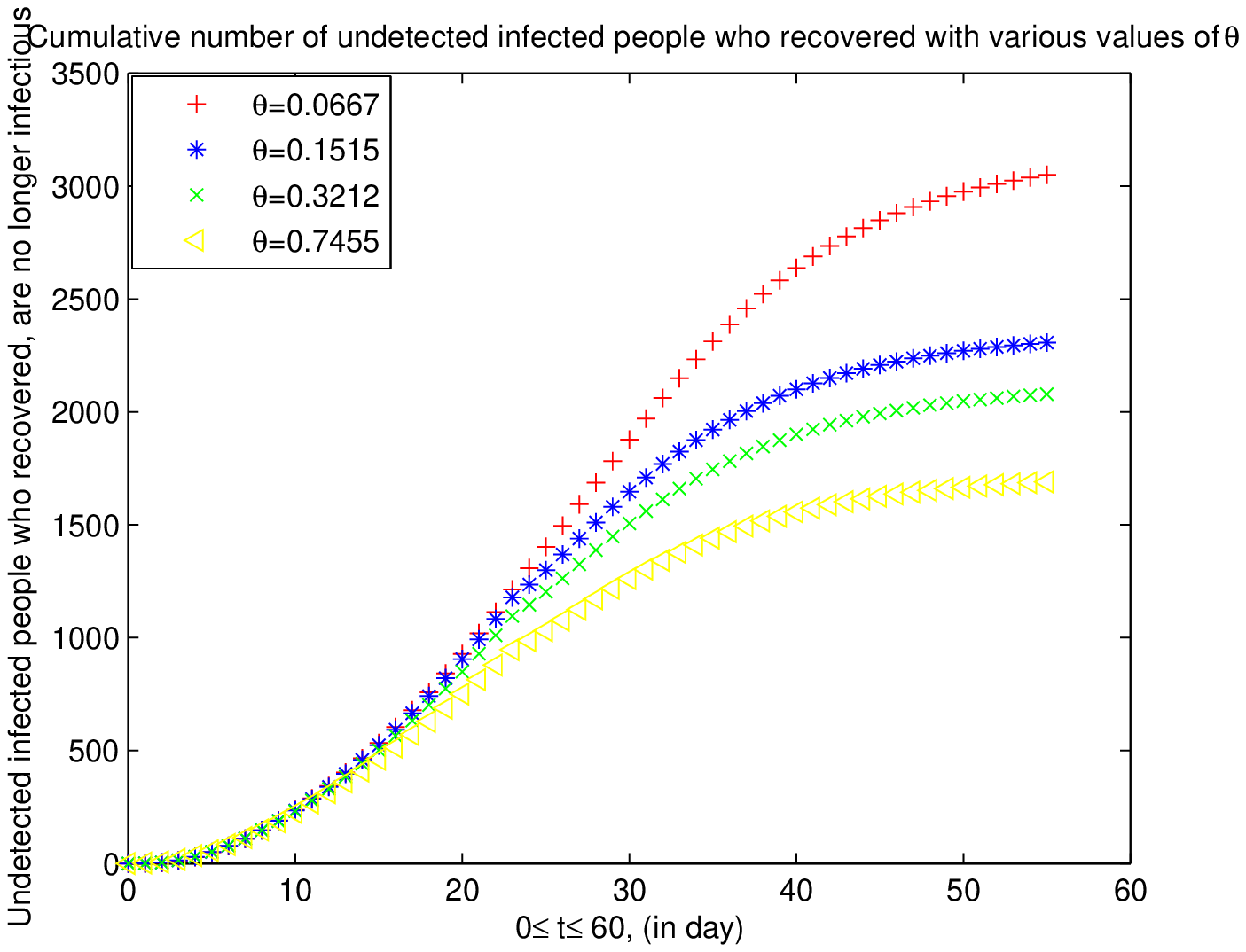,width=8cm}\\
                                             \textbf{Figure 5}  & \textbf{Figure 6}\\
          \psfig{file=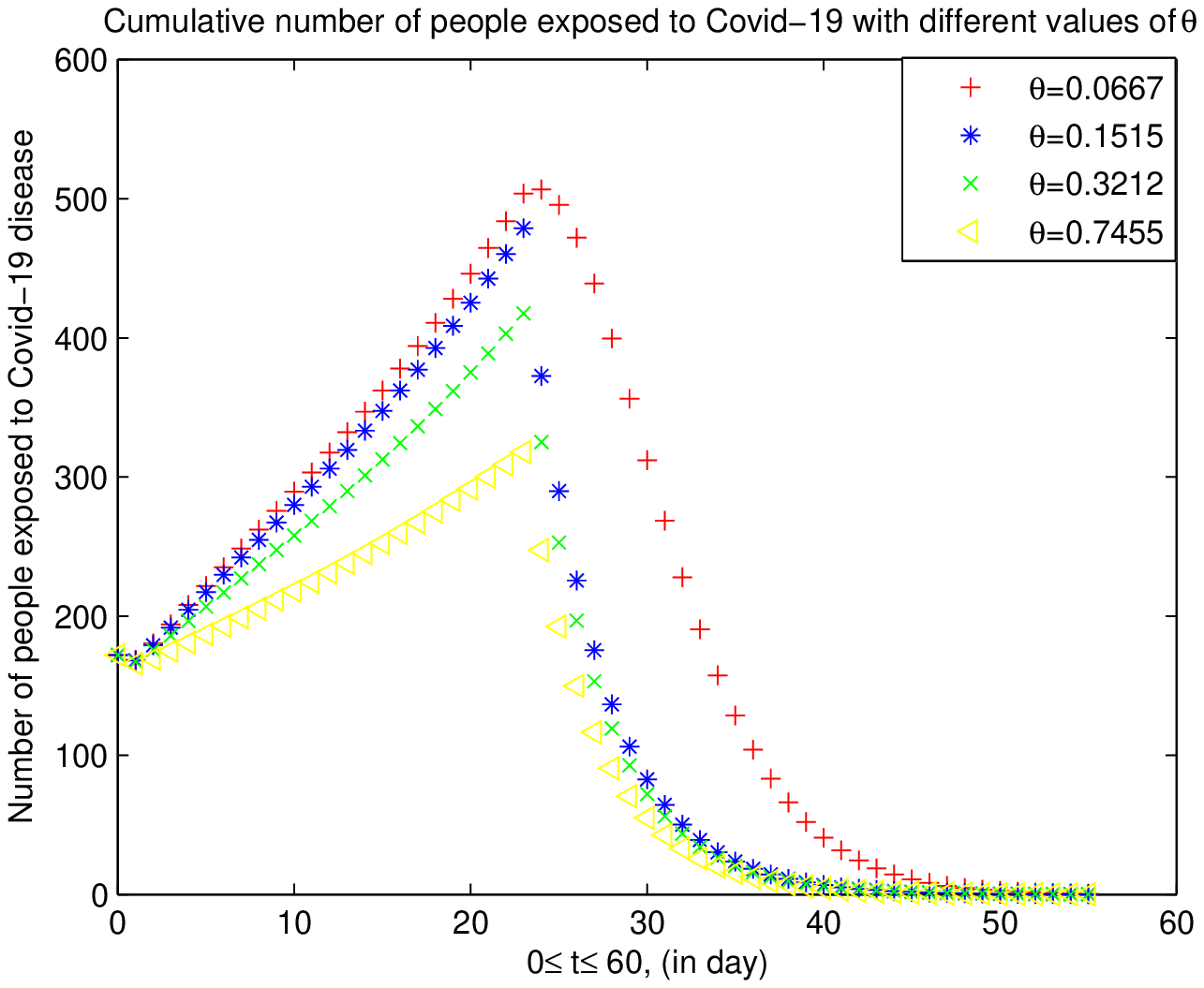,width=8cm}  & \psfig{file=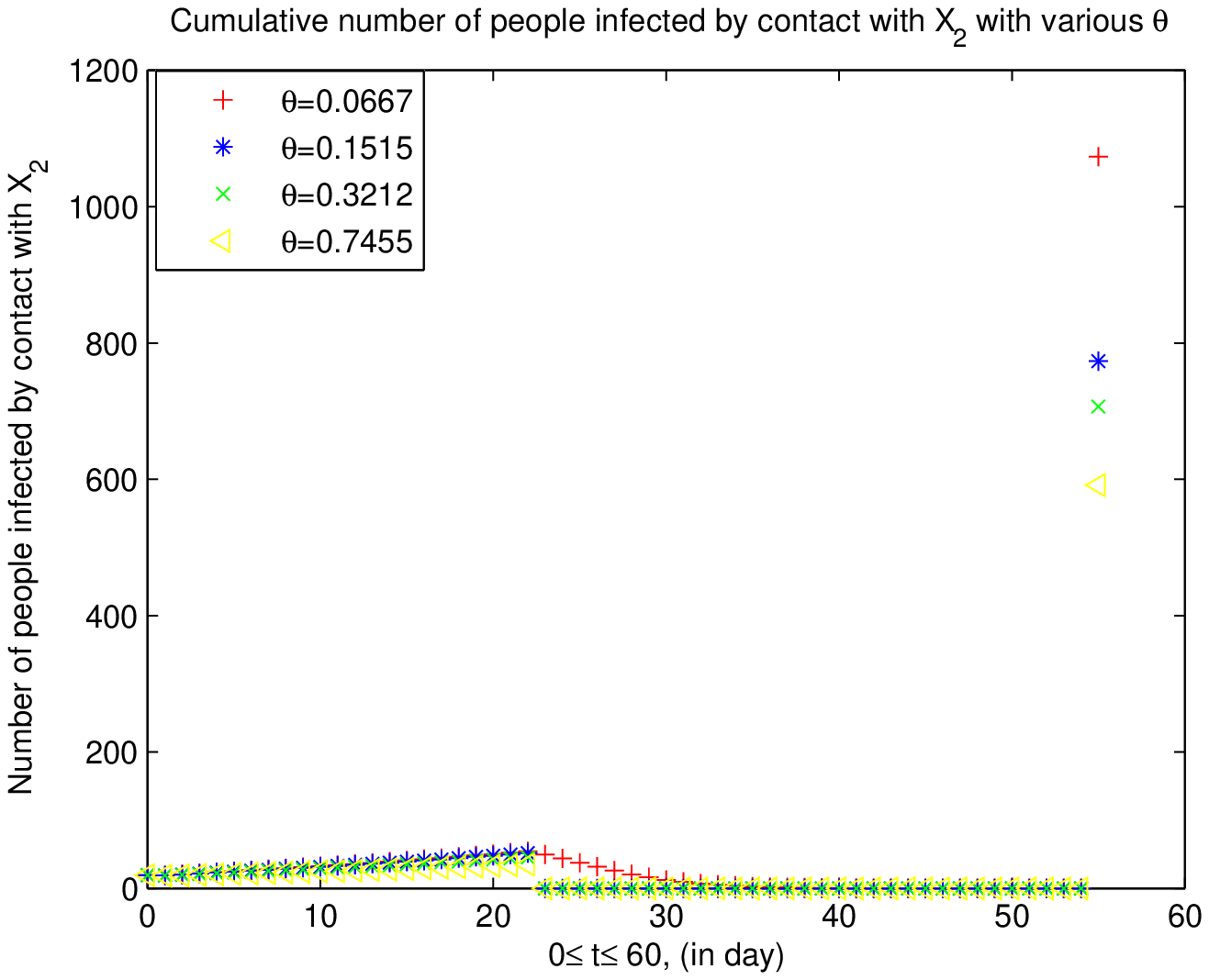,width=8cm}\\
                                     \textbf{Figure 7}  & \textbf{Figure 8}\\
         \end{tabular}
        \end{center}
         \caption{Number of recovered infected, recovered undetected infected, exposed to Covid-19, infected by contact with $X_{2}$}
          \label{fig2}
          \end{figure}
         \begin{figure}
         \begin{center}
          Model cumulative number of people infected by contact with $X_{4}$, $X_{10}$ and effective reproduction number.
          \begin{tabular}{c c}
         \psfig{file=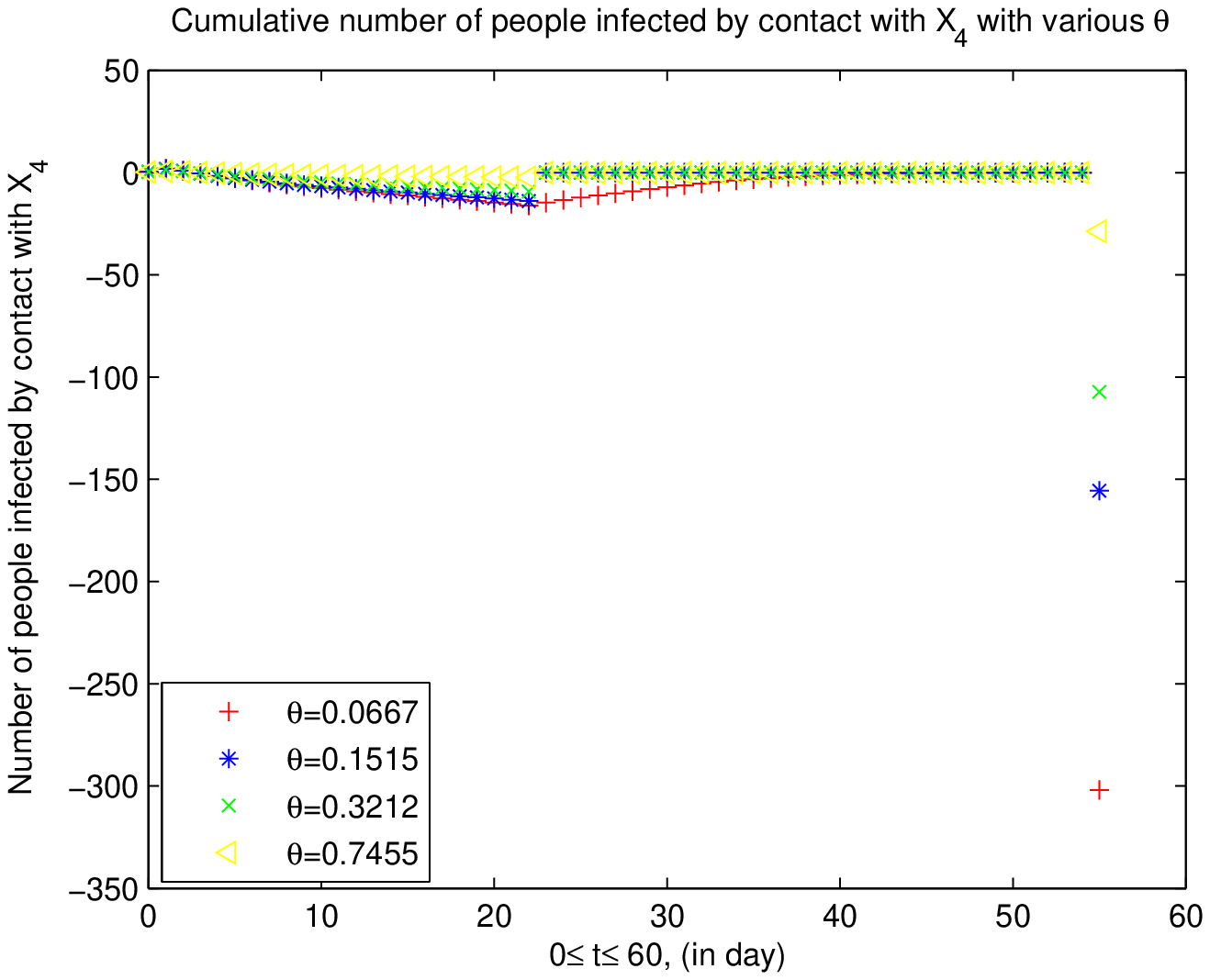,width=8cm}  & \psfig{file=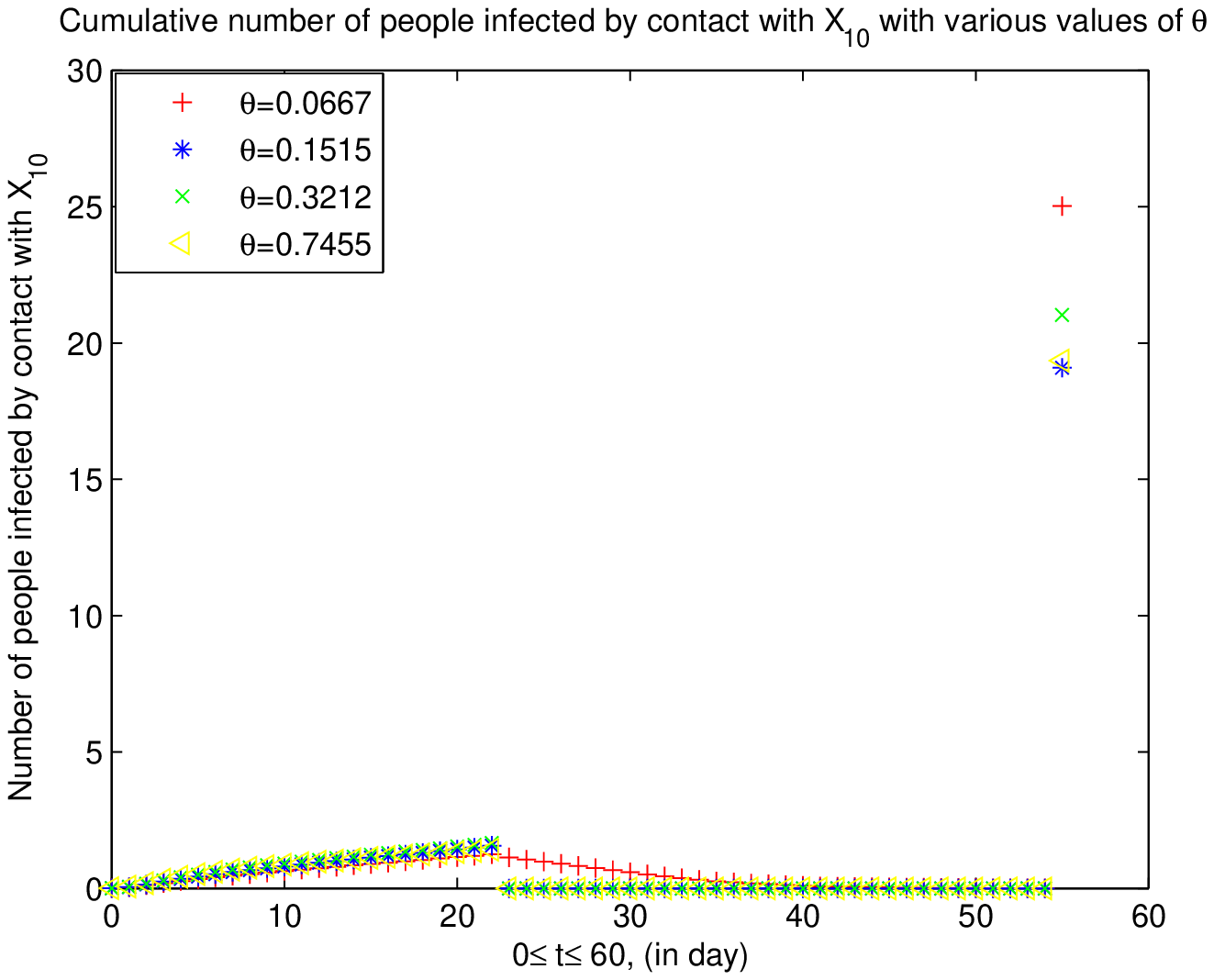,width=8cm}\\
                          \textbf{Figure 9}  & \textbf{Figure 10}\\
          \psfig{file=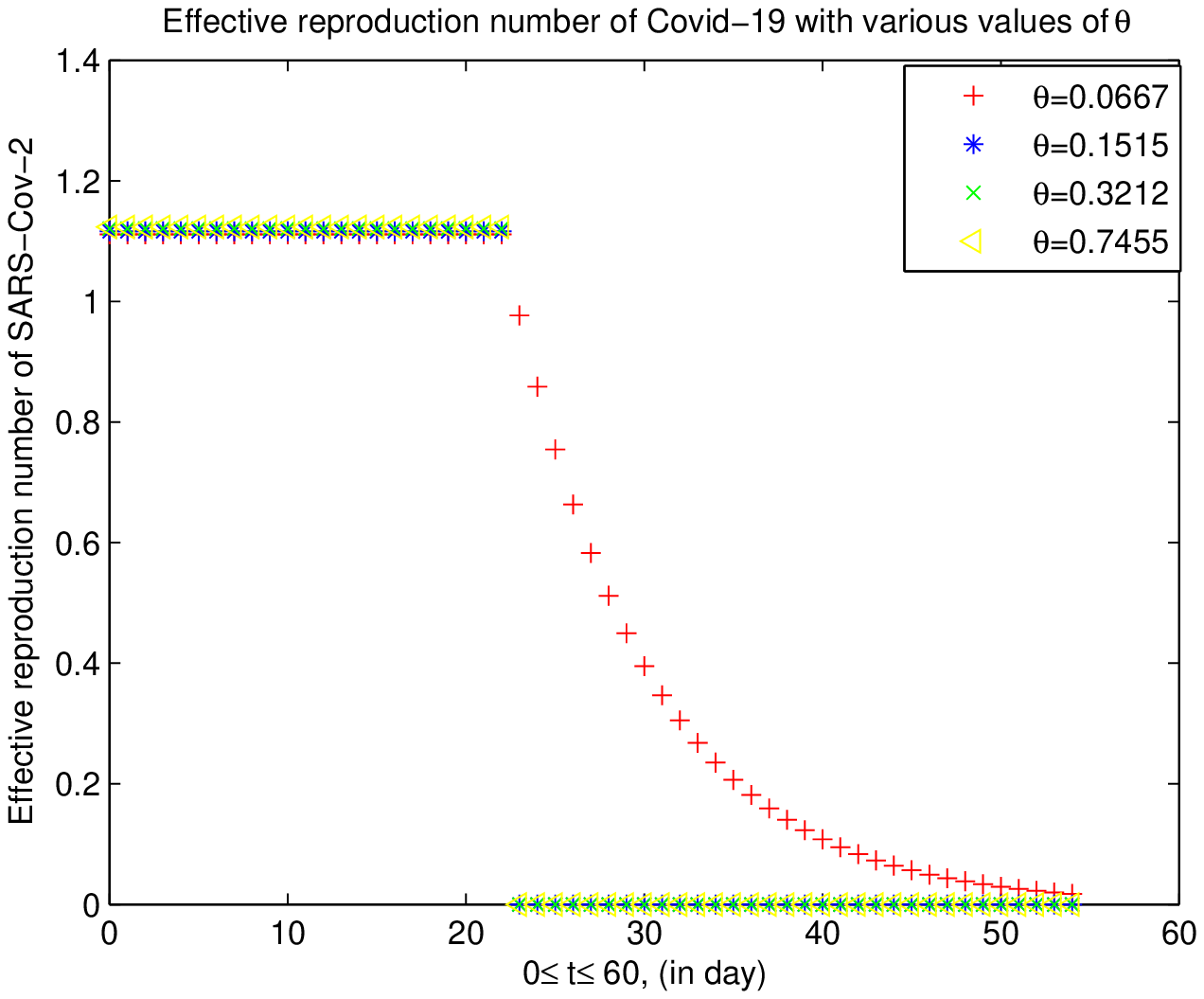,width=8cm}  & \\
                          \textbf{Figure 11}  & \\
         \end{tabular}
        \end{center}
         \caption{Number of infected by contact with $X_{4}$, $X_{10}$ and effective reproduction number $Re$}
          \label{fig3}
          \end{figure}

     \end{document}